\def \E{\mathbb E}
\def \P{\mathbb P}
\def \R{\mathbb R}
\def \S{{\cal S}}
\def \1{{\bf 1}}
\def \bX{{\bf X}}
\def \p0{{\frac{128\, \beta \max\{\mu_1^2,\mu_0\}}{\nu_0} \ \frac{r\, \log^2( n)}{n}}}
\def\R{{\mathbb R}}
\newtheorem{theorem}{Theorem}[section]
\newtheorem{lemma}{Lemma}
\newtheorem{defNew}{Definition}
\title{High-Rank Matrix Completion and \\ Subspace Clustering with Missing Data}
\author{
Brian Eriksson\thanks{The first two authors contributed equally to this paper.} \\ Boston University and \\ University of Wisconsin - Madison \\ eriksson@cs.bu.edu
\and
Laura Balzano{$^*$}\\ University of Wisconsin - Madison \\ sunbeam@ece.wisc.edu
\and
Robert Nowak \\ University of Wisconsin - Madison \\ nowak@ece.wisc.edu
}
\begin{document}

% If your paper is accepted and the title of your paper is very long,
% the style will print as headings an error message. Use the following
% command to supply a shorter title of your paper so that it can be
% used as headings.
%
%\runningtitle{I use this title instead because the last one was very long}

% If your paper is accepted and the number of authors is large, the
% style will print as headings an error message. Use the following
% command to supply a shorter version of the authors names so that
% they can be used as headings (for example, use only the surnames)
%
%\runningauthor{Surname 1, Surname 2, Surname 3, ...., Surname n}
\date{December 2011}

\maketitle

%\aistatsaddress{ Boston University \And University of Wisconsin-Madison \And University of Wisconsin-Madison } ]

\begin{abstract}
This paper considers the problem of completing a matrix with many missing entries under the assumption that the columns of the matrix belong to a union of multiple low-rank subspaces. This generalizes the standard {\em low-rank matrix completion} problem to situations in which the matrix rank can be quite high or even full rank.  Since the columns belong to a union of subspaces, this problem may also be viewed as a missing-data version of the {\em subspace clustering} problem.  Let $\bX$ be an $n\times N$ matrix whose (complete) columns lie in a union of at most $k$ subspaces, each of rank $\leq r < n$, and assume $N\gg kn$.
The main result of the paper shows that under mild assumptions each column of $\bX$ can be perfectly recovered with high probability from an incomplete version so long as at least $C r N \log^2(n)$ entries of $\bX$ are observed uniformly at random, with $C>1$ a constant depending on the usual incoherence conditions, the geometrical arrangement of subspaces, and the distribution of columns over the subspaces.
The result is illustrated with numerical experiments and an application to Internet distance matrix completion and topology identification.
\end{abstract}

\section{Introduction}

Consider a real-valued $n \times N$ dimensional matrix $\bX$.
Assume that the columns of $\bX$ lie in the union of at most $k$ subspaces of $\R^{n}$, each having dimension at most $r < n$ and assume that $N>kn$. We are especially interested in  ``high-rank'' situations in which the total rank (the rank of the union of the subspaces) may be $n$.  Our goal is to complete $\bX$ based on an observation of a small random subset of its entries. We propose a novel method for this matrix completion problem.  In the applications we have in mind $N$ may be arbitrarily large, and so we will focus on quantifying the probability that a given column is perfectly completed, rather than the probability that whole matrix is perfectly completed ({\em i.e.,} every column is perfectly completed).  Of course it is possible to translate between these two quantifications using a union bound, but that bound becomes meaningless if $N$ is extremely large.

Suppose the entries of $\bX$ are observed uniformly at random with probability $p_0$.
Let $\Omega$ denote the set of indices of observed entries and let $\bX_{\Omega}$ denote the observations of $\bX$.  Our main result shows that under a mild set of assumptions each column of $\bX$ can be perfectly recovered from $\bX_{\Omega}$ with high probability using a computationally efficient procedure if
\begin{eqnarray}
p_0 \geq C \, \frac{r}{n}  \log^2(n)
\label{eqn:mainBound1}
\end{eqnarray}
where $C>1$ is a constant depending on the usual incoherence conditions as well as the geometrical arrangement of subspaces and the distribution of the columns in the subspaces.

\subsection{Connections to Low-Rank Completion}
Low-rank matrix completion theory \cite{mcRecht} shows that an \mbox{$n\times N$} matrix of rank $r$ can be recovered from incomplete observations, as long as the number of entries observed (with locations sampled uniformly at random) exceeds $rN\log^2N$ (within a constant factor and assuming $n \leq N$).  It is also known that, in the same setting, completion is impossible if the number of observed entries is less than a constant times $rN \log N$ \cite{candes-tao}.  These results imply that if the rank of $\bX$ is close to $n$, then all of the entries are needed in order to determine the matrix.

Here we consider a matrix whose columns lie in the union of at most $k$ subspaces of $\R^{n}$.  Restricting the rank of each subspace to at most $r$, then the rank of the full matrix our situation could be as large as $kr$, yielding the requirement $krN \log^2 N$ using current matrix completion theory. In contrast, the bound in (\ref{eqn:mainBound1}) implies that the completion of each column is possible from a constant times $ r N \log^2 n$ entries sampled uniformly at random. Exact completion of every column can be guaranteed by replacing $\log^2n$ with $\log^2N$ is this bound, but since we allow $N$ to be very large we prefer to state our result in terms of per-column completion. Our method, therefore, improves significantly upon conventional low-rank matrix completion, especially when $k$ is large. This does not contradict the lower bound in \cite{candes-tao}, because the matrices we consider are not arbitrary high-rank matrices, rather the columns must belong to a union of rank $\leq r$ subspaces.

\subsection{Connections to Subspace Clustering}
Let $x_1,\dots,x_{N} \in \R^{n}$ and assume each $x_i$ lies in one of at most $k$ subspaces of $\R^{n}$.  Subspace clustering is the problem of learning the subspaces from $\{x_i\}_{i=1}^{N}$ and assigning each vector to its proper subspace; cf. \cite{vidal} for a overview.  This is a challenging problem, both in terms of computation and inference, but provably probably correct subspace clustering algorithms now exist \cite{kanatani01, gpcaJournal05, lerman11}.  Here we consider the problem of \emph{high rank matrix completion}, which is essentially equivalent to subspace clustering with missing data. This problem has been looked at in previous works \cite{weiss04, gpcaMissingData08}, but to the best of our knowledge our method and theoretical bounds are novel. Note that our sampling probability bound (\ref{eqn:mainBound1}) requires that only slightly more than $r$ out of $n$ entries are observed in each column, so the matrix may be highly incomplete.

%Let $x_1,\dots,x_{N} \in \R^{n}$ and assume each $x_i$ lies in one of at most $k$ subspaces of $\R^{n}$.  Subspace clustering is the problem of learning the subspaces from $\{x_i\}_{i=1}^{N}$ and assigning each vector to its proper subspace; cf. \cite{vidal} for a overview.  This is a challenging problem, both computationally and inferentially speaking, but provably probably correct subspace clustering algorithms now exist \cite{kanatani01, gpcaJournal05, lerman11}.    The problem considered in this paper is essentially equivalent to subspace clustering with missing data. This problem has been looked at in~\cite{weiss04, gpcaMissingData08}, but to the best of our knowledge the method and theoretical bounds we present here are novel.
%Note that our sample bound (\ref{bound1}) requires that slightly more than $r$ out of $n$ entries are observed in each column, so the level of missingness can be quite high.

\subsection{A Motivating Application}
There are many applications of subspace clustering, and it is reasonable to suppose that data may often be missing in high-dimensional problems.  One such application is the Internet distance matrix completion and topology identification problem.  Distances between networked devices can be measured in terms of hop-counts, the number of routers between the devices.  Infrastructures exist that record distances from
$N$ end host computers to a set of $n$ monitoring points throughout the Internet.  The complete set of distances determines the network topology between the computers and the monitoring points \cite{EBN08}. These infrastructures are based entirely on passively monitoring of normal traffic.  One advantage is the ability to monitor a very large portion of the Internet, which is not possible using active probing methods due to the burden they place on networks.  The disadvantage of passive monitoring is that measurements collected are based on normal traffic, which is not specifically designed or controlled, therefore a subset of the distances may not be observed.  This poses a matrix completion problem, with the incomplete distance matrix being potentially full-rank in this application. However, computers tend to be clustered within subnets having a small number of egress (or access) points to the Internet at large.  The number of egress points in a subnet limits the rank of the submatrix of distances from computers in the subnet to the monitors.  Therefore the columns of the $n\times N$ distance matrix lie in the union of $k$ low-rank subspaces, where $k$ is the number of subnets.  The solution to the matrix completion problem yields all the distances (and hence the topology) as well as the subnet clusters.

\subsection{Related Work}
\label{sec:relWork}

The proof of the main result draws on ideas from matrix completion theory, subspace learning and detection with missing data, and subspace clustering.
One key ingredient in our approach is the celebrated results on low-rank Matrix Completion~\cite{mcRecht,candes-tao, candes-recht}.  Unfortunately, in many real-world problems where missing data is present, particularly when the data is generated from a union of subspaces, these matrices can have very large rank values ({\em e.g.,} networking data in \cite{domainImpute}).  Thus, these prior results will require effectively all the elements be observed to accurately reconstruct the matrix.

Our work builds upon the results of \cite{balzanoSubspace}, which quantifies the deviation of an incomplete vector norm with respect to the incoherence of the sampling pattern.  While this work also examines subspace detection using incomplete data, it assumes complete knowledge of the subspaces.

While research that examines subspace learning has been presented in \cite{mauroSubspace}, the work in this paper differs by the concentration on learning from incomplete observations ({\em i.e.,} when there are missing elements in the matrix), and by the methodological focus ({\em i.e.,} nearest neighbor clustering versus a multiscale Singular Value Decomposition approach).

\subsection{Sketch of Methodology}
The algorithm proposed in this paper involves several relatively intuitive steps, outlined below. We go into detail for each of these steps in following sections.
\\ \vspace{-.1in} \\
\noindent{\bf Local Neighborhoods.} A subset of columns of $\bX_{\Omega}$ are selected uniformly at random. These are called {\em seeds}.  A set of nearest neighbors is identified for each seed from the remainder of $\bX_{\Omega}$.  In Section~\ref{sec:localneighborhoods}, we show that nearest neighbors can be reliably identified, even though a large portion of the data are missing, under the usual incoherence assumptions. \\ \vspace{-.1in} \\
\noindent{\bf Local Subspaces.} The subspace spanned by each seed and its neighborhood is identified using matrix completion.  If matrix completion fails ({\em i.e.,} if the resulting matrix does not agree with the observed entries and/or the rank of the result is greater than $r$), then the seed and its neighborhood are discarded.  In Section~\ref{sec:localsubspaces} we show that when the number of seeds and the neighborhood sizes are large enough, then with high probability all $k$ subspaces are identified. We may also identify additional subspaces which are unions of the true subspaces, which leads us to the next step. An example of these neighborhoods is shown in Figure~\ref{fig:nnExample}.\\ \vspace{-.1in} \\
\noindent{\bf Subspace Refinement.} The set of subspaces obtained from the matrix completions is pruned to remove all but $k$ subspaces. The pruning is accomplished by simply discarding all subspaces that are spanned by the union of two or more other subspaces. This can be done efficiently, as is shown in Section~\ref{sec:subspacerefine}. \\ \vspace{-.1in} \\
\noindent{\bf Full Matrix Completion.} Each column in $\bX_{\Omega}$ is assigned to its proper subspace and completed by projection onto that subspace, as described in Section~\ref{sec:fullmonty}.  Even when many observations are missing, it is possible to find the correct subspace and the projection using results from subspace detection with missing data~\cite{balzanoSubspace}.  The result of this step is a completed matrix $\widehat{\bX}$ such that each column is correctly completed with high probability. \\ \vspace{-.1in} \\
The mathematical analysis will be presented in the next few sections, organized according to these steps.  After proving the main result,  experimental results are presented in the final section.

\begin{figure}[htb]
\begin{minipage}[h]{0.99\linewidth}
%    \centerline{\epsfig{figure=nnFigure.eps,width=6.0cm}}
\centerline{\includegraphics[width=8cm]{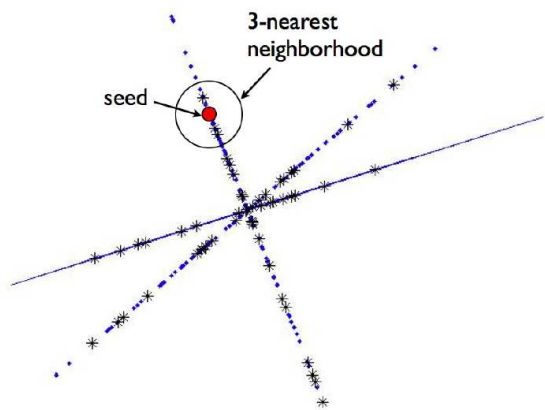}}
\end{minipage}
{
\caption{\label{fig:nnExample} \small Example of nearest-neighborhood selecting points on from a single subspace. For illustration, samples from three one-dimensional subspaces are depicted as small dots.  The large dot is the seed.  The subset of samples with significant observed support in common with that of the seed are depicted by $\ast$'s. If the density of points is high enough, then the nearest neighbors we identify will belong to the same subspace as the seed.  In this case we depict the ball containing the $3$ nearest neighbors of the seed with significant support overlap. }}
\label{ss1}
\end{figure}

%\subsection{Organization of Paper}
%TBA
% In Section~\ref{sec:relWork} we review the relevant related work in the area of completing matrices that belong to a union of multiple low-rank subspaces.  Our {\sc SubspaceLearning} algorithm is introduced in Section~\ref{sec:learn} along with the analysis which shows the ability to learn the subspaces of matrices with very high rank.  Experiments on both synthetic and real world data demonstrates the performance of our technique in Section~\ref{sec:exp}.  Finally, we present concluding remarks in Section~\ref{sec:conclude}.

\section{Key Assumptions and Main Result}

The notion of incoherence plays a key role in matrix completion and subspace recovery from incomplete observations.

\begin{defNew}
The \emph{coherence} of an $r$-dimensional subspace ${\cal S} \subseteq \R^n$ is $$\mu({\cal S}) := \frac{n}{r} \max_j \|P_{\cal S} e_j\|_2^2$$
where $P_{\cal S}$ is the projection operator onto ${\cal S}$ and $\{e_j\}$ are the canonical unit vectors for $\R^n$.
\end{defNew}
Note that $1 \leq \mu(\S) \leq n/r$.
The coherence of single vector $x\in \R^n$ is $\mu(x)=\frac{n \|x\|_\infty^2}{\|x\|_2^2} $,
which is precisely the coherence of the one-dimensional subspace spanned by $x$.
With this definition, we can state the main assumptions we make about the matrix $\bX$.
\begin{description}
\item[A1.] The columns of $\bX$ lie in the union of at most $k$ subspaces, with $k=o(n^d)$ for some $d>0$. The subspaces are denoted by $\S_1,\dots,\S_k$ and each has rank at most $r<n$.  The $\ell_2$-norm of each column is $\leq 1$.
\item[A2.] The coherence of each subspace is bounded above by $\mu_0$.  The coherence of each column is bounded above by $\mu_1$ and for any pair of columns, $x_1$ and $x_2$, the coherence of $x_1-x_2$ is also bounded above by $\mu_1$.
\item[A3.] The columns of $\bX$ do not lie in the intersection(s) of the subspaces with probability $1$, and if $\mbox{rank}(\S_i)=r_i$, then any subset of $r_i$ columns from $\S_i$ spans $\S_i$ with probability $1$. Let $0<\epsilon_0<1$ and $\S_{i,\epsilon_0}$ denote the subset of points in $\S_i$ at least $\epsilon_0$ distance away from any other subspace.  There exists a constant $0 < \nu_0 \leq 1$,
depending on $\epsilon_0$, such that \vspace{-.05in}
\begin{description}
\item[(i)] The probability that a column selected uniformly at random belongs to $\S_{i,\epsilon_0}$  is at least $\nu_0/k$.
\item[(ii)]If $x \in \S_{i,\epsilon_0}$, then the probability that a column selected uniformly at random belongs to the ball of radius $\epsilon_0$ centered at $x$ is at least $\nu_0 \epsilon_0^r/k$.
\end{description}
\end{description} \vspace{-.075in}

The conditions of {\bf A3} are met if, for example, the columns are drawn from a mixture of continuous distributions on each of the subspaces. The value of $\nu_0$ depends on the geometrical arrangement of the subspaces and the distribution of the columns within the subspaces.  If the subspaces are not too close to each other, and the distributions within the subspaces are fairly uniform, then typically $\nu_0$ will be not too close to $0$.    We define three key quantities, the confidence parameter $\delta_0$, the required number of ``seed'' columns $s_0$, and a quantity $\ell_0$ related to the neighborhood formation process (see Algorithm~\ref{alg:localNeighbor} in Section~\ref{sec:localneighborhoods}):
\begin{eqnarray}
\delta_0 & := & n^{2-2\beta^{1/2}}\log n \ , \ \mbox{for some $\beta>1$} \ , \label{delta} \\
s_0 & := & \left\lceil \frac{k(\log k + \log 1/\delta_0)}{(1-e^{-4})\nu_0} \right\rceil  \ , \nonumber \\
\ell_0 & := & \left\lceil \max\left\{\frac{2k}{\nu_0 (\frac{\epsilon_0}{\sqrt{3}})^r} \, , \, \frac{8k  \log(s_0/\delta_0)}{ n \nu_0 (\frac{\epsilon_0}{\sqrt{3}})^r} \right\} \right\rceil  \ . \nonumber
\end{eqnarray}
We can now state the main result of the paper.
\begin{theorem}  Let $\bX$ be an $n\times N$ matrix satisfying {\bf{A1}}-{\bf{A3}}.
  Suppose that each entry of $\bX$ is observed independently with probability $p_0$.  If
\begin{eqnarray*}
p_0 \ \geq  \ \p0
\label{bound}
\end{eqnarray*}
and
\begin{eqnarray*}
N & \geq & \ell_0 n(2\delta_0^{-1}s_0\ell_0 n)^{\mu_0^2\log p_0^{-1}}
\end{eqnarray*}
then each column of $\bX$ can be perfectly recovered with probability at least $1- (6+15s_0) \, \delta_0$, using the methodology sketched above (and detailed later in the paper).
\label{mainthm}
\end{theorem}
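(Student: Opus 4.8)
The plan is to prove the theorem by analyzing the four algorithmic stages in sequence, bounding the failure probability of each, and then combining them with a union bound calibrated so that the seed-dependent events contribute the $15s_0\delta_0$ term while the remaining global and per-column events contribute the $6\delta_0$ term. Throughout I would condition on a ``good'' event on which (a) every one of the $k$ true subspaces is hit by at least one seed whose neighborhood is informative, (b) each surviving local matrix completion returns a genuine subspace, and (c) each column is assigned to and completed within the correct subspace. Since the claim is \emph{per-column} recovery, most of the accounting only needs to hold for a single (arbitrary but fixed) test column together with the global subspace-identification event, which is why $N$, not $\log N$, governs the bound.

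First I would treat the Local Neighborhoods stage. Fix a seed $x$ with $x\in\S_{i,\epsilon_0}$; by {\bf A3}(i) a randomly drawn seed lies in some $\S_{i,\epsilon_0}$ with probability $\ge \nu_0/k$, so drawing $s_0$ seeds guarantees, by a coupon-collector / Chernoff argument, that all $k$ subspaces are represented with probability $\ge 1-\delta_0$ — this is exactly how $s_0$ is defined. The delicate point is that neighbors are ranked by a distance computed only on coordinates observed in common with the seed. Here I would invoke the incomplete-vector norm concentration of \cite{balzanoSubspace}: under the incoherence bounds {\bf A2} (including the coherence of differences $x_1-x_2$), the partial distance on the commonly observed support is within a multiplicative factor of the true distance, with a failure probability controlled by $\mu_1$ and the overlap size. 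Combined with {\bf A3}(ii), which says the $\epsilon_0$-ball around $x$ carries probability mass $\ge \nu_0\epsilon_0^r/k$, this shows that once enough columns with sufficient support overlap are examined, at least $\ell_0 n$ of the identified nearest neighbors genuinely lie in $\S_i$. The requirement that the overlap be large enough for the concentration bound to bite is precisely what forces the enormous lower bound on $N$: a random column shares only about $p_0^2 n$ coordinates with the seed, so well-overlapping in-ball columns are rare, and $N\ge \ell_0 n(2\delta_0^{-1}s_0\ell_0 n)^{\mu_0^2\log p_0^{-1}}$ is what guarantees $\ell_0 n$ of them exist with probability $\ge 1-\delta_0$.

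Next, for Local Subspaces, each seed together with its $\ell_0 n$ validated neighbors forms an $n\times(\ell_0 n+1)$ submatrix whose columns all lie in a single rank-$\le r$ subspace. I would apply standard low-rank matrix completion \cite{mcRecht}: since each entry is observed with probability $p_0 \ge \p0$, matching the $\tfrac{r}{n}\log^2(n)$ per-entry sampling requirement up to the stated constant, completion exactly recovers the submatrix and hence $\S_i$ with probability $\ge 1-\delta_0$ per seed. If completion fails its consistency/rank test the neighborhood is discarded, which only removes bad seeds; the point is that the clean seeds succeed. A union over the $s_0$ seeds, each carrying a bounded number of failure modes (neighbor validity, overlap count, completion success, rank test), yields the $15s_0\delta_0$ contribution. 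The Subspace Refinement stage is then essentially deterministic: any spurious recovered subspace is a union of two or more true subspaces and hence strictly contains them, so discarding every candidate spanned by the union of others leaves exactly $\S_1,\dots,\S_k$; the general-position part of {\bf A3} (no column in an intersection, any $r_i$ columns spanning $\S_i$) guarantees the true subspaces are not themselves redundantly expressible, so none is wrongly removed.

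Finally, for Full Matrix Completion I would take the fixed test column and, using the subspace-detection-with-missing-data guarantee of \cite{balzanoSubspace}, show that its projection residual onto its true subspace is (essentially) zero while its residual onto every wrong subspace is bounded below — the $\epsilon_0$-separation in {\bf A3} together with the incoherence {\bf A2} makes these residuals distinguishable from the observed entries alone. Correct assignment then yields exact completion by recovering the column's coordinates within the identified subspace from its observed entries, contributing a constant number of $\delta_0$-events to the $6\delta_0$ term. The hard part, in my view, is the Local Neighborhoods analysis: making the partial-distance concentration of \cite{balzanoSubspace} strong enough to separate same-subspace from cross-subspace points when the commonly observed support has size only $\approx p_0^2 n$, while simultaneously guaranteeing through the $N$-bound that enough well-overlapping in-ball neighbors exist. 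This coupling between the rarity of support overlap and the incoherence-based norm concentration is what simultaneously drives the $p_0$ requirement and the $\mu_0^2\log p_0^{-1}$ exponent in the lower bound on $N$, and it is where the bookkeeping is most subtle.
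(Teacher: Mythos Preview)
Your high-level decomposition matches the paper's, but there is one genuine technical gap and a couple of bookkeeping errors. The gap is in the Local Subspaces step: the neighbors of a seed are selected \emph{conditionally} on having at least $t_0$ observations on the seed's support, so the sampling pattern of the resulting $n\times n$ neighborhood matrix is \emph{not} uniform on the rows indexed by the seed's support. You cannot therefore directly invoke the uniform-sampling guarantees of \cite{mcRecht} as you do. The paper fixes this with an explicit \emph{thinning} procedure: on the seed's support, each neighbor's observations are randomly reduced so that the number of surviving entries there is again $\mbox{Binomial}(t,p_0)$, restoring uniformity; the seed column itself is excluded from the completion because its dependence cannot be removed. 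Without this step your application of low-rank completion is unjustified.

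Two smaller inaccuracies. First, the neighborhood matrix is $n\times n$, not $n\times(\ell_0 n+1)$: Algorithm~\ref{alg1} examines $\ell_0 n$ candidates with adequate overlap and then retains only $n$ whose partial distance to the seed is below $\epsilon_0/\sqrt{2}$; Lemma~\ref{lemma:neighbors} guarantees at least $n$ (not $\ell_0 n$) of the candidates lie in the $\epsilon_0/\sqrt{3}$-ball. Second, your attribution of the $15s_0\delta_0$ term entirely to the per-seed completion stage is off: the paper's accounting is $12s_0\delta_0$ from Lemma~\ref{lemma:mcNN} (simultaneous completion of the $s_0$ neighborhood matrices) plus $3(k-1)\delta_0$ from the union over the $k-1$ wrong subspaces in Lemma~\ref{thm:projection}, bounded by $3s_0\delta_0$ since $s_0>k$; the remaining $6\delta_0$ collects the events in Lemma~\ref{NN}, the $N$-sufficiency argument, and the two non-$k$-dependent events in Lemma~\ref{thm:projection}. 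Finally, the partial-distance concentration (Lemma~\ref{partial}) is proved in the paper directly via McDiarmid's inequality using the coherence bound on differences in {\bf A2}, not by citing \cite{balzanoSubspace}; that reference is used only in the final assignment step.
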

The requirements on sampling are essentially the same as those for standard low-rank matrix completion, apart from requirement that the total number of columns $N$ is sufficiently large.
This is needed to ensure that each of the subspaces is sufficiently represented in the matrix.  The requirement on $N$ is polynomial in $n$ for fixed $p_0$, which is easy to see based on the definitions of $\delta_0$, $s_0$, and $\ell_0$ (see further discussion at the end of Section~\ref{sec:localneighborhoods}).

Perfect recovery of each column is guaranteed with probability that decreases linearly  in $s_0$,
which itself is linear in $k$ (ignoring log factors). This is expected since this problem is more difficult than $k$ individual low-rank matrix completions.  We state our results in terms of a per-column (rather than full matrix) recovery guarantee. A full matrix recovery guarantee can be given by replacing $\log^2n$ with $\log^2 N$.  This is evident from the final completion step discussed in Lemma~\ref{thm:projection}, below.  However, since $N$ may be quite large (perhaps arbitrarily large) in the applications we envision, we chose to state our results in terms of a per-column guarantee.

The details of the methodology and lemmas leading to the theorem above are developed in the subsequent sections following the four steps of the methodology outlined above.  In certain cases it will be more convenient to consider sampling the locations of observed entries uniformly at random {\em with replacement} rather than without replacement, as assumed above.   The following lemma will be useful for translating bounds derived assuming sampling with replacement to our situation
(the same sort of relation is noted in Proposition~3.1~in~\cite{mcRecht}).
\begin{lemma}
Draw $m$ samples independently and uniformly from $\{1,\dots,n\}$ and let $\Omega'$ denote the resulting subset of unique values.  Let $\Omega_m$ be a subset of size $m$ selected uniformly at random from $\{1,\dots,n\}$. Let $E$ denote an event depending on a random subset of $\{1,\dots,n\}$.  If $\P(E(\Omega_m))$ is a non-increasing function of $m$, then $\P(E(\Omega')) \geq \P(E(\Omega_m))$.
\label{meta}
\end{lemma}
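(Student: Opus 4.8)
The plan is to prove the lemma by conditioning on the (random) cardinality of $\Omega'$ and exploiting the permutation symmetry of sampling with replacement. The starting observation is that, because $\Omega'$ is the set of \emph{distinct} values among $m$ independent draws, we always have $|\Omega'| \leq m$, with strict inequality exactly when at least one collision occurs. Thus $\Omega'$ is a random subset of $\{1,\dots,n\}$ whose size $M := |\Omega'|$ is supported on $\{1,\dots,m\}$, and the whole argument will come down to comparing $\P(E(\Omega'))$ against $\P(E(\Omega_m))$ term-by-term over the possible values of $M$.

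The key step is to identify the conditional law of $\Omega'$ given its size. I would argue that for each $m' \in \{1,\dots,m\}$,
\begin{eqnarray*}
\Omega' \,\big|\, \{|\Omega'| = m'\} \ \text{ has the same distribution as } \ \Omega_{m'}.
\end{eqnarray*}
This is a pure symmetry fact: the law of the ordered sample $(i_1,\dots,i_m)$ drawn i.i.d.\ uniformly is invariant under any permutation $\pi$ of the label set $\{1,\dots,n\}$, and the map from the ordered sample to the unordered set of distinct values $\Omega'$ commutes with applying $\pi$ coordinatewise. Hence the distribution of $\Omega'$ is exchangeable over labels, and so conditioned on the event $\{|\Omega'| = m'\}$ every subset of $\{1,\dots,n\}$ of cardinality $m'$ is equally likely -- which is exactly the uniform-without-replacement law defining $\Omega_{m'}$.

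Granting this, the law of total probability gives
\begin{eqnarray*}
\P(E(\Omega')) \ = \ \sum_{m'=1}^{m} \P(|\Omega'| = m') \, \P(E(\Omega_{m'})) \ ,
\end{eqnarray*}
and since $m' \leq m$ on every term while $\P(E(\Omega_{\cdot}))$ is assumed non-increasing, each factor satisfies $\P(E(\Omega_{m'})) \geq \P(E(\Omega_m))$. Pulling this uniform lower bound out of the sum and using that the mixture weights $\P(|\Omega'|=m')$ sum to one yields $\P(E(\Omega')) \geq \P(E(\Omega_m))$, as claimed. I expect the only genuinely delicate point to be the conditional-uniformity claim in the second paragraph; the rest is a routine conditioning-and-monotonicity computation. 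The cleanest way to make that step airtight is to note that $\P(\Omega' = A)$ depends on the set $A$ only through $|A|$ (by the label-permutation invariance), from which conditional uniformity on $\{|\Omega'|=m'\}$ is immediate, so no explicit combinatorial count of collision patterns is needed.
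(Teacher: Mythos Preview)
Your argument is correct and follows essentially the same route as the paper: condition on the random cardinality $|\Omega'|$, use that $\Omega'$ given its size is uniform over subsets of that size, and then apply the monotonicity assumption term-by-term. If anything, you supply more detail than the paper does, since the paper asserts the identity $\P(E(\Omega')\,|\,|\Omega'|=k)=\P(E(\Omega_k))$ without the explicit label-permutation-invariance justification you provide.
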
 %\vspace{-.3in}
\begin{proof}
For $k=1,\dots,m$, let $\Omega_k$ denote a subset of size $k$ sampled uniformly at random from $\{1,\dots,n\}$,
and let $m'=|\Omega'|$.
\begin{eqnarray*}
\P(E(\Omega')) &= & \sum_{k=0}^m \P\left(E(\Omega')\, | \, m'=k\right) \P(m'=k) \\
& = & \sum_{k=0}^m \P(E(\Omega_k)) \P(m'=k) \\
& \geq & \P(E(\Omega_m)) \sum_{k=0}^m  \P(m'=k) \ .
\end{eqnarray*}
\end{proof}

\section{Local Neighborhoods}
\label{sec:localneighborhoods}

In this first step, $s$ columns of ${\bf X}_\Omega$ are selected uniformly at random and a set of ``nearby'' columns are identified for each, constituting a local neighborhood of size $n$. All bounds that hold are designed with probability at least $1-\delta_0$, where $\delta_0$ is defined in (\ref{delta}) above.   The $s$ columns are called ``seeds.'' The required size of $s$ is determined as follows.
\begin{lemma}
Assume {\bf{A3}} holds. If the number of chosen seeds,
\begin{eqnarray*}
s & \geq & \frac{k(\log k + \log 1/\delta_0)}{(1-e^{-4})\nu_0} \ ,
\end{eqnarray*}
then with probability greater than $1-\delta_0$ for each $i=1,\dots,k$, at least one seed is in $\S_{i,\epsilon_0}$ and each seed column has at least
\begin{eqnarray}
%\eta_0 := \frac{n}{2} \, \p0
\eta_0 := {\frac{64\, \beta \max\{\mu_1^2,\mu_0\}}{\nu_0} \ {r\, \log^2( n)}}
\label{eta0}
\end{eqnarray}
observed entries.
\label{seed}
\end{lemma}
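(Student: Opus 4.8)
The statement bundles two guarantees, and I would prove them together rather than separately: a \emph{coverage} guarantee (each subspace neighborhood $\S_{i,\epsilon_0}$ receives a seed) and an \emph{observation} guarantee (each such seed carries at least $\eta_0$ revealed entries, with $\eta_0$ as in~(\ref{eta0})). The plan is to fold both properties into a single per-seed success event, so that the somewhat unusual $(1-e^{-4})$ factor appearing in the threshold for $s$ emerges on its own. Concretely, I would call a uniformly drawn column a \emph{good seed for} $\S_i$ if it lies in $\S_{i,\epsilon_0}$ \emph{and} has at least $\eta_0$ observed entries, lower bound the probability of this joint event, and then run a coupon-collector union bound over the $k$ subspaces.

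First I would control the observation count. Under the independent sampling model each entry is revealed with probability $p_0$, so the number of observed entries $X$ in any fixed column is $\text{Binomial}(n,p_0)$, with the same distribution for every column. The hypothesis on $p_0$ gives $np_0 \ge 2\eta_0$ (the constant $128$ in the lower bound on $p_0$ is exactly twice the $64$ defining $\eta_0$), so a multiplicative Chernoff bound at relative deviation $1/2$ yields $\P(X<\eta_0) \le \P(X\le np_0/2) \le e^{-np_0/8}$. Since $\max\{\mu_1^2,\mu_0\}\ge 1$, $\beta>1$ and $r\ge 1$, one has $np_0\ge 32$ over the relevant range of $n$, so this tail is at most $e^{-4}$.

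Next I would combine the two events. The key point is that the bound $\P(X\ge\eta_0)\ge 1-e^{-4}$ holds \emph{uniformly} over the identity of the drawn column, since the observation count is $\text{Binomial}(n,p_0)$ regardless of the column's subspace. Conditioning on the drawn column and summing, \textbf{A3}(i) then gives
\[
\P(\text{column is a good seed for }\S_i) \ \ge\ \frac{\nu_0}{k}\,(1-e^{-4}).
\]
Because the $s$ seeds are drawn independently, the probability that $\S_i$ receives no good seed is at most $\bigl(1-\tfrac{\nu_0(1-e^{-4})}{k}\bigr)^s \le \exp\!\bigl(-s\nu_0(1-e^{-4})/k\bigr)$, using $1-x\le e^{-x}$. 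A union bound over the $k$ subspaces caps the total failure probability by $k\exp(-s\nu_0(1-e^{-4})/k)$; setting this equal to $\delta_0$ and solving for $s$ reproduces exactly the stated threshold $s \ge \frac{k(\log k + \log 1/\delta_0)}{(1-e^{-4})\nu_0}$. On the complementary event every subspace is covered by a seed that simultaneously has at least $\eta_0$ observations, which is the conclusion.

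\textbf{Main obstacle.} The main difficulty is purely in pinning the constants so that the advertised $(1-e^{-4})$ — and not some looser factor — appears in the bound on $s$; this is precisely what forces the joint ``good seed'' formulation, since treating coverage and observation count by two independent union bounds would not regenerate the stated threshold. The two numerical couplings to check are $np_0\ge 2\eta_0$ (matching the $\delta=1/2$ Chernoff deviation to the factor $128$ versus $64$) and $np_0\ge 32$ (so the Chernoff tail is $\le e^{-4}$), both immediate from the definitions. Finally, if one wants \emph{every} one of the $s$ seeds — including those later discarded in the local-subspace step — to meet the $\eta_0$ threshold, an extra union bound adds only $s\,e^{-np_0/8}$ to the failure probability, which is negligible since $np_0$ scales like $r\log^2 n$.
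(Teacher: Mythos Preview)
Your proposal is correct and follows essentially the same route as the paper: define the joint ``good seed'' event (membership in $\S_{i,\epsilon_0}$ \emph{and} $\ge\eta_0$ observations), bound the observation tail by $e^{-4}$ via Chernoff using $np_0\ge 2\eta_0$, multiply by $\nu_0/k$ from {\bf A3}(i), and then run the coupon-collector union bound over the $k$ subspaces. The only cosmetic difference is that the paper inverts $(1-\nu_0'/k)^s\le\delta_0/k$ via $\log\bigl(\tfrac{x}{x-1}\bigr)\ge 1/x$ while you use the equivalent $1-x\le e^{-x}$; your closing remark about the stronger reading (``\emph{every} seed has $\ge\eta_0$ observations'') is a nice clarification that the paper leaves implicit.
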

\begin{proof}
First note that from Theorem~\ref{mainthm}, the expected number of observed entries per column is at least
\begin{eqnarray*}
\eta = {\frac{128\, \beta \max\{\mu_1^2,\mu_0\}}{\nu_0} \ {r\, \log^2( n)}}
\end{eqnarray*}
%$$\eta\ :=\ n \, \p0 \ .$$

Therefore, the number of observed entries $\widehat\eta$ in a column selected uniformly at random is probably not significantly less.  More precisely, by Chernoff's bound we have
$$\P(\widehat \eta \leq \eta/2) \ \leq \ \exp(-\eta/8) \ < \ e^{-4} \ . $$
Combining this with {\bf{A3}}, we have the probability that a randomly selected column belongs to
$\S_{i,\epsilon_0}$ and has $\eta/2$ or more observed entries is at least $\nu_0'/k$, where
$\nu_0':=(1-e^{-4})\nu_0$.
Then, the probability that the set of $s$ columns does not contain a column from $\S_{i,\epsilon_0}$ with at least $\eta/2$ observed entries is less than $(1-\nu_0'/k)^s$.  The probability that the set does not contain at least one column from $\S_{i,\epsilon_0}$ with $\eta/2$ or more observed entries, for $i=1,\dots,k$ is less than $\delta_0=k(1-\nu_0'/k)^s$ . Solving for $s$ in terms of $\delta_0$ yields
$$s \ = \ \frac{\log k + \log 1/\delta_0}{\log\left(\frac{k/\nu_0'}{k/\nu_0'-1}\right)}$$
The result follows by noting that $\log(x/(x-1)) \geq 1/x,$ for $x>1$.
\end{proof}

Next, for each seed we must find a set of $n$ columns from the same subspace as the seed.
This will be accomplished by identifying columns that are $\epsilon_0$-close to the seed, so that if the seed belongs to $\S_{i,\epsilon_0}$, the columns must belong to the same subspace.  Clearly the total number of columns $N$ must be sufficiently large so that $n$ or more such columns can be found.  We will return to the requirement on $N$ a bit later, after first dealing with the following challenge.

Since the columns are only partially observed, it may not be possible to determine how close each is to the seed.  We address this by showing that if a column and the seed are both observed on enough common indices, then the incoherence assumption {\bf{A2}} allows us reliably estimate the distance.
\begin{lemma}
Assume {\bf{A2}} and let $y=x_1-x_2$, where $x_1$ and $x_2$ are two columns of $\bX$.  Assume there is a common set of indices of size $q\leq n$ where both $x_1$ and $x_2 $ are observed.  Let $\omega$ denote this common set of indices and let $y_{\omega}$  denote the corresponding subset of $y$. Then for any $\delta_0>0$, if the number of commonly observed elements
$$q \ \geq \ 8 \mu_1^2 \log(2/\delta_0)  \ , $$
then with probability at least $1-\delta_0$
$$\frac{1}{2}\|y\|_2^2 \ \leq \ \frac{n}{q} \|y_{\omega}\|_2^2 \ \leq \ \frac{3}{2} \|y\|_2^2 \ .$$
\label{partial}
\end{lemma}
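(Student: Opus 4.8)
The plan is to recognize this as a concentration inequality for the partially-observed squared norm of a fixed, incoherent vector, and to reduce it to a standard tail bound for a sum of bounded terms. First I would record what the incoherence hypothesis buys us: since $y = x_1 - x_2$ is a difference of two columns, assumption \textbf{A2} gives $\mu(y)\le \mu_1$, and by the definition of coherence this means $\|y\|_\infty^2 \le \frac{\mu_1}{n}\|y\|_2^2$. Thus no single coordinate $y_j^2$ can be much larger than the average coordinate energy $\|y\|_2^2/n$ — this is exactly the ``flatness'' we need for the observed coordinates to be representative of the whole.

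Next I would make the randomness explicit. Because each entry of $\bX$ is observed independently, conditioning on the common support having size $q$ makes $\omega$ a uniformly random subset of $\{1,\dots,n\}$ of size $q$ (the observation indicators are exchangeable), i.e. $q$ coordinates drawn without replacement. Writing $\|y_\omega\|_2^2 = \sum_{j\in\omega} y_j^2$, linearity gives $\E\,\|y_\omega\|_2^2 = \frac{q}{n}\|y\|_2^2$, so the quantity $\frac{n}{q}\|y_\omega\|_2^2$ we wish to control is an unbiased estimate of $\|y\|_2^2$ and the claim is simply a two-sided concentration statement: $\frac{n}{q}\|y_\omega\|_2^2$ lies within a factor $\tfrac12$ of its mean.

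The core step is then a Hoeffding bound. By the coherence estimate above the summands $y_j^2$ are confined to $[0,\,\mu_1\|y\|_2^2/n]$, so with $M := \mu_1\|y\|_2^2/n$ each term has range at most $M$. Sampling is without replacement, but a sum over a without-replacement sample is dominated in convex order by the corresponding with-replacement (i.i.d.) sum, so Hoeffding's inequality applies verbatim; alternatively one transfers an i.i.d. bound using Lemma~\ref{meta}. Taking the deviation $t = \tfrac12 \E\,\|y_\omega\|_2^2 = \frac{q}{2n}\|y\|_2^2$, the two-sided Hoeffding bound gives a failure probability at most $2\exp(-2t^2/(qM^2)) = 2\exp(-q/(2\mu_1^2))$. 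Requiring this to be at most $\delta_0$ yields $q \ge 2\mu_1^2\log(2/\delta_0)$, which is implied by the stated hypothesis $q \ge 8\mu_1^2\log(2/\delta_0)$ (the extra constant leaves comfortable slack). On the complementary event both inequalities $\tfrac12\|y\|_2^2 \le \frac{n}{q}\|y_\omega\|_2^2 \le \tfrac32\|y\|_2^2$ hold.

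The only genuinely delicate point is the without-replacement sampling: one must not silently apply an i.i.d. tail bound to $\sum_{j\in\omega}y_j^2$. I expect this to be the main thing to get right, and I would handle it by invoking Hoeffding's convex-order domination (or negative association of the inclusion indicators, which also licenses Chernoff-type bounds), rather than by a fresh calculation. Everything else — the coherence-to-$\ell_\infty$ conversion and solving the exponent for $q$ — is routine bookkeeping, and the gap between the derived $2\mu_1^2$ and the stated $8\mu_1^2$ confirms that a Hoeffding (rather than Bernstein) argument is intended, since a Bernstein bound would instead produce a requirement linear in $\mu_1$.
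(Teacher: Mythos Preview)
Your proposal is correct and follows essentially the same route as the paper: treat $\|y_\omega\|_2^2$ as a sum of $q$ bounded terms sampled uniformly from $\{y_1^2,\dots,y_n^2\}$, pass to with-replacement sampling via Lemma~\ref{meta} (the paper does exactly this), apply a bounded-differences concentration inequality, and use the coherence bound $\|y\|_\infty^2\le \mu_1\|y\|_2^2/n$ from \textbf{A2} to simplify the exponent. The only cosmetic difference is that the paper invokes McDiarmid with bounded difference $2\|y\|_\infty^2$, obtaining $2\exp(-q/(8\mu_1^2))$ and hence the stated constant $8$, whereas your direct Hoeffding application yields the sharper $2\exp(-q/(2\mu_1^2))$; as you observe, the hypothesis $q\ge 8\mu_1^2\log(2/\delta_0)$ covers either version with room to spare.
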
 \vspace{-.3in}
\begin{proof}
Note that $\|y_{\omega}\|_2^2$ is the sum of $q$ random variables drawn uniformly at random without replacement from the set $\{y_1^2,y_2^2,\dots,y_n^2\}$, and $\E \|y_{\omega}\|_2^2 = \frac{q}{n}\|y\|_2^2$.  We will prove the bound under the assumption that, instead, the $q$ variables are sampled with replacement, so that they are independent. By Lemma~\ref{meta}, this will provide the desired result. Note that if one variable in the sum $\|y_{\omega}\|_2^2$ is replaced with another value, then the sum changes in value by at most $2\|y\|_\infty^2$. Therefore, McDiramid's Inequality shows that  for $t>0$
$$\P\left(\left|\|y_\omega\|_2^2-\frac{q}{n}\|y\|_2^2\right|\geq t\right) \ \leq \ 2\,\exp\left(\frac{-t^2}{2q\|y\|_\infty^4}\right) \ , $$
or equivalently
$$\P\left(\left|\frac{n}{q}\|y_\omega\|_2^2-\|y\|_2^2\right|\geq t\right) \ \leq \ 2\,\exp\left(\frac{-qt^2}{2n^2\|y\|_\infty^4}\right) \ . $$
Assumption {\bf A2} implies that $n^2\|y\|_\infty^4 \leq \mu_1^2 \|y\|_2^4$, and so we have
$$\P\left(\left|\frac{n}{q}\|y_\omega\|_2^2-\|y\|_2^2\right|\geq t\right) \ \leq \ 2\,\exp\left(\frac{-qt^2}{2\mu_1^2\|y\|_2^4}\right) \ . $$
Taking $t=\frac{1}{2}\|y\|_2^2$ yields the result.
\end{proof}
Suppose that $x_1\in \S_{i,\epsilon_0}$ (for some $i$) and that $x_2 \not \in \S_i$, and
 that both $x_1,x_2$ observe $q\geq 2 \mu_0^2 \log(2/\delta_0)$ common indices. Let $y_\omega$ denote the difference between $x_1$ and $x_2$ on the common support set. If the {\em partial distance} $\frac{n}{q}\|y_\omega\|_2^2 \leq \epsilon_0^2/2$, then the result above implies that with probability at least $1-\delta_0$
$$\|x_1-x_2\|_2^2 \ \leq \ 2 \, \frac{n}{q} \, \|y_\omega\|_2^2 \ \leq \ \epsilon_0^2 . $$
On the other hand if $x_2 \in \S_i$ and $\|x_1-x_2\|_2^2 \leq \epsilon_0^2/3$, then with probability at least $1-\delta_0$
$$\frac{n}{q} \, \|y_\omega\|_2^2 \ \leq \ \frac{3}{2} \|x_1-x_2\|_2^2 \ \leq \ \epsilon_0^2/2 \ . $$
Using these results we will proceed as follows.  For each seed we find all columns that have at least \mbox{$t_0> 2 \mu_0^2 \log(2/\delta_0)$} observations at indices in common with the seed (the precise value of $t_0$ will be specified in a moment).  Assuming that this set is sufficiently large, we will select $\ell n$ these columns uniformly at random, for some integer $\ell \geq 1$.  In particular, $\ell$ will be chosen so that with high probability at least $n$ of the columns will be within $\epsilon_0/\sqrt{3}$ of the seed, ensuring that with probability at least $\delta_0$ the corresponding partial distance of each will be within $\epsilon_0/\sqrt{2}$.  That is enough to guarantee with the same probability that the columns are within $\epsilon_0$ of the seed.  Of course, a union bound will be needed so that the distance bounds above hold uniformly over the set of $s\ell n$ columns under consideration, which means that we will need each to have at least $t_0:=2\mu_0^2 \log(2s\ell n/\delta_0)$ observations at indices in common with the corresponding seed.   All this is predicated on $N$ being large enough so that such columns exist in $\bX_\Omega$.  We will return to this issue later, after determining the requirement for $\ell$.  For now we will simply assume that $N\geq \ell n$.

\begin{lemma}
Assume {\bf{A3}} and for each seed $x$ let $T_{x,\epsilon_0}$ denote the number of columns of $\bX$ in the ball of radius $\epsilon_0/\sqrt{3}$ about $x$.  If the number of columns selected for each seed, $\ell{n}$, such that,
\begin{eqnarray*}
\ell \ \geq \ \max\left\{\frac{2k}{\nu_0 (\frac{\epsilon_0}{\sqrt{3}})^r} \, , \, \frac{8k  \log(s/\delta_0)}{ n \nu_0 (\frac{\epsilon_0}{\sqrt{3}})^r} \right\} \ ,
\label{ell}
\end{eqnarray*}
then $\P\left(T_{x,\epsilon_0}  \leq  n \right) \ \leq \ \delta_0$ for all $s$ seeds.
\label{lemma:neighbors}
\end{lemma}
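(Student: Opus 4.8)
The plan is to view $T_{x,\epsilon_0}$ as a binomial-type count and apply a multiplicative Chernoff lower-tail bound, with the two entries of the max defining $\ell$ serving two distinct purposes. First I would fix a single seed $x$ and work in the operative case $x\in\S_{i,\epsilon_0}$ for some $i$, which is the case that matters since Lemma~\ref{seed} guarantees at least one seed falls in each $\S_{i,\epsilon_0}$. By Assumption~\textbf{A3}(ii), read at the radius $\epsilon_0/\sqrt{3}\le\epsilon_0$ (so that the in-ball probability still scales like the $r$-th power of the radius), a uniformly random column lies in the ball of radius $\epsilon_0/\sqrt{3}$ about $x$ with probability at least $p:=\nu_0(\epsilon_0/\sqrt{3})^r/k$. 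Writing $T_{x,\epsilon_0}=\sum_{j=1}^{\ell n}\mathbf{1}[x_j\text{ lies in the ball}]$ over the $\ell n$ selected columns and treating the indicators as independent Bernoulli($\ge p$) variables --- legitimate under the i.i.d.\ column model, and in any case sampling without replacement only sharpens a lower-tail estimate, cf.\ Lemma~\ref{meta} --- I obtain $\E[T_{x,\epsilon_0}]\ge \ell n p$.

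The heart of the argument is the estimate $\P(T\le(1-\gamma)\mu)\le\exp(-\gamma^2\mu/2)$ with $\mu=\E[T_{x,\epsilon_0}]$. The first term $\ell\ge 2k/(\nu_0(\epsilon_0/\sqrt{3})^r)$ forces $\mu\ge \ell n p\ge 2n$, so the target threshold $n$ sits below $\mu/2$ and hence $\{T_{x,\epsilon_0}\le n\}\subseteq\{T_{x,\epsilon_0}\le\mu/2\}$; taking $\gamma=1/2$ gives $\P(T_{x,\epsilon_0}\le n)\le\exp(-\mu/8)\le\exp(-\ell n p/8)$. The second term $\ell\ge 8k\log(s/\delta_0)/(n\,\nu_0(\epsilon_0/\sqrt{3})^r)$ is exactly what makes $\ell n p\ge 8\log(s/\delta_0)$, so the exponent is at least $\log(s/\delta_0)$ and $\P(T_{x,\epsilon_0}\le n)\le\delta_0/s$. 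A union bound over the $s$ seeds then upgrades this per-seed bound to a simultaneous guarantee holding with probability at least $1-\delta_0$; in particular each per-seed probability is at most $\delta_0/s\le\delta_0$, which is the stated conclusion and explains the appearance of $\log(s/\delta_0)$.

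I expect the main obstacle to be the probabilistic bookkeeping rather than the Chernoff estimate: precisely justifying the independent-Bernoulli model for the selected columns, and verifying that \textbf{A3}(ii) indeed furnishes the lower bound $p=\nu_0(\epsilon_0/\sqrt{3})^r/k$ at the reduced radius $\epsilon_0/\sqrt{3}$ while conditioning on the selected columns sharing at least $t_0$ observed coordinates with the seed --- this conditioning, together with the presupposition $N\ge\ell n$, must not erode the in-ball probability. Once $p$ and the independence are secured, the only remaining subtlety is to keep straight the complementary roles of the two terms in the max (one pushing $\mu$ above $2n$ so that a fixed multiplicative deviation suffices, the other sizing the exponent to beat $\delta_0/s$), after which the bound follows by a routine computation and a union bound over seeds.
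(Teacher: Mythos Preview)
Your proposal is correct and mirrors the paper's proof essentially step for step: both invoke \textbf{A3} to get the per-column in-ball probability $p=\nu_0(\epsilon_0/\sqrt{3})^r/k$, apply the multiplicative Chernoff lower-tail bound with $\gamma=1/2$, use the first term of the max to push $\mu\ge 2n$ and the second to make $\exp(-\mu/8)\le\delta_0/s$, and finish with a union bound over the $s$ seeds. Your additional caveats about reading \textbf{A3}(ii) at the smaller radius and about conditioning on the $t_0$-overlap event are not addressed in the paper's proof either, so you are if anything slightly more careful than the original.
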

\begin{proof}
The probability that a column chosen uniformly at random from $\bX$ belongs to this ball is at least $\nu_0 (\epsilon_0/\sqrt{3})^r/k$, by Assumption {\bf{A3}}. Therefore the expected number of points is $$\E[T_{x,\epsilon_0}] \geq \frac{\ell n \nu_0 (\frac{\epsilon_0}{\sqrt{3}})^r}{k} \ .$$ By Chernoff's bound for any $0 < \gamma < 1$
\begin{eqnarray*}
\P\left(T_{x,\epsilon_0} \leq (1-\gamma)  \frac{\ell n \nu_0 (\frac{\epsilon_0}{\sqrt{3}})^r}{k} \right) \leq \exp\left(-\frac{\gamma^2}{2}   \frac{\ell n \nu_0 (\frac{\epsilon_0}{\sqrt{3}})^r}{k} \right) \ .
\end{eqnarray*}
%FORMATEDIT
%\begin{eqnarray*}
%\P\left(T_{x,\epsilon_0} \leq (1-\gamma)  \frac{\ell n \nu_0 (\frac{\epsilon_0}{\sqrt{3}})^r}{k} \right) & \leq & \\ & & \hspace{-1in} \exp\left(-\frac{\gamma^2}{2}   \frac{\ell n \nu_0 (\frac{\epsilon_0}{\sqrt{3}})^r}{k} \right) \ .
%\end{eqnarray*}
Take $\gamma=1/2$ which yields
\begin{eqnarray*}
\P\left(T_{x,\epsilon_0} \leq  \frac{\ell n \nu_0 (\frac{\epsilon_0}{\sqrt{3}})^r}{2k} \right) & \leq & \exp\left(-  \frac{\ell n \nu_0 (\frac{\epsilon_0}{\sqrt{3}})^r}{8k} \right) \ .
\end{eqnarray*}
We would like to choose $\ell$ so that  $\frac{\ell n \nu_0 (\frac{\epsilon_0}{\sqrt{3}})^r}{2k} \geq n$ and
so that $\exp\left(-  \frac{\ell n \nu_0 (\frac{\epsilon_0}{\sqrt{3}})^r}{8k} \right) \leq \delta_0/s$ (so that the desired result fails for one or more of the $s$ seeds is less than $\delta_0$). The first condition leads to the requirement $\ell \geq \frac{2k}{\nu_0 (\frac{\epsilon_0}{\sqrt{3}})^r}$. The second condition produces the requirement
$\ell \ \geq \ \frac{8k  \log(s/\delta_0)}{n \nu_0 (\frac{\epsilon_0}{\sqrt{3}})^r}.$
\end{proof}

We can now formally state the procedure for finding local neighborhoods in Algorithm~\ref{alg:localNeighbor}.  Recall that the number of observed entries in each seed is at least $\eta_0$, per Lemma~\ref{seed}.

\begin{algorithm}[t]
\caption{- Local Neighborhood Procedure}\label{alg:localNeighbor} {\textbf {Input:}} $n$, $k$, $\mu_0$, $\epsilon_0$, $\nu_0$, $\eta_0$, $\delta_0 > 0$.
\begin{eqnarray*}
s_0 & := & \left\lceil \frac{k(\log k + \log 1/\delta_0)}{(1-e^{-4})\nu_0} \right\rceil \\
\ell_0 & := & \left\lceil \max\left\{\frac{2k}{\nu_0 (\frac{\epsilon_0}{\sqrt{3}})^r} \, , \, \frac{8k  \log(s_0/\delta_0)}{ n \nu_0 (\frac{\epsilon_0}{\sqrt{3}})^r} \right\} \right\rceil \\
t_0 & := & \lceil 2\mu_0^2 \log(2s_0\ell_0 n/\delta_0) \rceil
\end{eqnarray*}
{\bf Steps:}
\begin{enumerate}
\item Select $s_0$ ``seed'' columns uniformly at random and discard all with less than $\eta_0$
observations
\item For each seed, find all columns with $t_0$ observations at locations observed in the seed
\item Randomly select $\ell_0 n$ columns from each such set
\item Form local neighborhood for each seed by randomly \mbox{selecting} $n$ columns with partial distance less than $\epsilon_0/\sqrt{2}$ from the seed
\end{enumerate}
\label{alg1}
\end{algorithm}

%\begin{figure}[t]
%%\fbox{\parbox[b]{3.2in}{{\underline{\bf Local Neighborhood Procedure}}  \\ \vspace{-.05in} \\
%{\bf inputs:} $n$, $k$, $\mu_0$, $\epsilon_0$, $\nu_0$, $\eta_0$, $\delta_0 > 0$
%\begin{eqnarray*}
%s_0 & := & \left\lceil \frac{k(\log k + \log 1/\delta_0)}{(1-e^{-4})\nu_0} \right\rceil \\
%\ell_0 & := & \left\lceil \max\left\{\frac{2k}{\nu_0 (\frac{\epsilon_0}{\sqrt{3}})^r} \, , \, \frac{8k  \log(s_0/\delta_0)}{ n \nu_0 (\frac{\epsilon_0}{\sqrt{3}})^r} \right\} \right\rceil \\
%t_0 & := & \lceil 2\mu_0^2 \log(2s_0\ell_0 n/\delta_0) \rceil
%\end{eqnarray*}
%{\bf steps:} \\
%1) select $s_0$ ``seed'' columns uniformly at random and discard all with less than $\eta_0$
%observations \\
%2) for each seed, find all columns with $t_0$ observations at locations observed in the seed \\
%3) randomly select $\ell_0 n$ columns from each such set \\
%4) form local neighborhood for each seed by randomly \mbox{selecting} $n$ columns with partial distance less than $\epsilon_0/\sqrt{2}$ from the seed
%\label{fig:neighbor}
%\end{figure}

\begin{lemma}
If $N$ is sufficiently large and $\eta_0>t_0$, then the Local Neighborhood Procedure in Algorithm~\ref{alg:localNeighbor} produces at least $n$ columns within $\epsilon_0$ of each seed, and at least one seed will belong
to each of $\S_{i,\epsilon_0}$, for $i=1,\dots,k$, with probability at least $1-3\delta_0$.
\label{NN}
\end{lemma}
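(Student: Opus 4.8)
The plan is to bound the probability that Algorithm~\ref{alg:localNeighbor} fails by decomposing a failure into three events, each controlled at confidence $\delta_0$ by one of the preceding lemmas, and then to apply a union bound to obtain the $1-3\delta_0$ guarantee. Fix $s=s_0$ and $\ell=\ell_0$. Define: $E_1$, the event that after Step~1 some $\S_{i,\epsilon_0}$ contains no surviving seed, i.e.\ no seed lands in $\S_{i,\epsilon_0}$ with at least $\eta_0$ observed entries; $E_2$, the event that for some seed the number of the $\ell_0 n$ columns drawn in Step~3 lying within true distance $\epsilon_0/\sqrt3$ of that seed is at most $n$; and $E_3$, the event that the partial-distance estimate $\frac{n}{q}\|y_\omega\|_2^2$ fails to lie within the factors $\tfrac12,\tfrac32$ of the true squared distance for at least one of the $s_0\ell_0 n$ (seed, candidate column) pairs under consideration. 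Lemma~\ref{seed} gives $\P(E_1)\le\delta_0$, Lemma~\ref{lemma:neighbors} gives $\P(E_2)\le\delta_0$, and applying Lemma~\ref{partial} at confidence $\delta_0/(s_0\ell_0 n)$ to each pair followed by a union bound gives $\P(E_3)\le\delta_0$; this is precisely why $t_0$ is set to a constant times $\log(2s_0\ell_0 n/\delta_0)$. Since every candidate column shares at least $q\ge t_0$ indices with its seed, and since $\eta_0>t_0$ ensures the seed itself carries enough observations for such an overlap to occur, the hypothesis of Lemma~\ref{partial} is met for each pair.

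On the complement $E_1^c\cap E_2^c\cap E_3^c$, which has probability at least $1-3\delta_0$, both conclusions follow by chaining the two directions of Lemma~\ref{partial}. First, $E_2^c$ supplies at least $n$ candidate columns with $\|y\|_2^2\le\epsilon_0^2/3$; for each the upper bound of Lemma~\ref{partial} (valid on $E_3^c$) yields $\frac{n}{q}\|y_\omega\|_2^2\le\tfrac32\cdot\epsilon_0^2/3=\epsilon_0^2/2$, so each such column passes the Step~4 threshold and Step~4 can indeed collect $n$ columns. Conversely, any column admitted in Step~4 has $\frac{n}{q}\|y_\omega\|_2^2\le\epsilon_0^2/2$, and the lower bound of Lemma~\ref{partial} then gives $\|y\|_2^2\le 2\cdot\epsilon_0^2/2=\epsilon_0^2$, so every selected neighbor is within $\epsilon_0$ of its seed. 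Combined with $E_1^c$, which places a surviving seed in each $\S_{i,\epsilon_0}$, this establishes both assertions of the lemma.

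The step I expect to require the most care is justifying that Lemma~\ref{lemma:neighbors} actually applies to the columns selected in Step~3, since those are drawn not from $\bX$ at large but from the subpopulation sharing at least $t_0$ observed indices with the seed. The key fact is that the observation pattern is generated independently of the column values, so conditioning on the shared-support event does not change the probability that a column lies in the ball of radius $\epsilon_0/\sqrt3$ about a seed in $\S_{i,\epsilon_0}$; that probability is still at least $\nu_0(\epsilon_0/\sqrt3)^r/k$ by {\bf A3}(ii), and the Chernoff argument of Lemma~\ref{lemma:neighbors} carries over verbatim. The remaining ingredient is the ``$N$ sufficiently large'' hypothesis: one must verify that, under the stated lower bound on $N$, each seed admits at least $\ell_0 n$ columns sharing $t_0$ indices with it, so that Steps~2--3 are feasible. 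This is a counting/concentration estimate on the number of adequately overlapping columns, governed by the $\mu_0^2\log p_0^{-1}$ exponent in the bound on $N$, which I would either carry out here or defer to the discussion of the requirement on $N$ already promised in the text.
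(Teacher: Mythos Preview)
Your proposal is correct and follows essentially the same route as the paper: decompose failure into three events controlled by Lemmas~\ref{seed}, \ref{lemma:neighbors}, and \ref{partial} (the latter union-bounded over all $s_0\ell_0 n$ pairs, which is exactly why $t_0$ carries the $\log(2s_0\ell_0 n/\delta_0)$ factor), then take a union bound to get $1-3\delta_0$. If anything you are more careful than the paper on two points it glosses over---the two-sided use of Lemma~\ref{partial} (enough columns pass the $\epsilon_0/\sqrt{2}$ threshold \emph{and} every column that passes is truly within $\epsilon_0$), and the fact that Lemma~\ref{lemma:neighbors} is being applied to the overlap-conditioned subpopulation, which is justified by independence of the sampling pattern from the column values.
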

\begin{proof}
Lemma~\ref{seed} states that if we select $s_0$ seeds, then with probability at least $1-\delta_0$ there is a seed in each $\S_{i,\epsilon_0}$, $i=1,\dots,k$, with at least $\eta_0$
observed entries, where $\eta_0$ is defined in (\ref{eta0}).
Lemma~\ref{lemma:neighbors} implies that if $\ell_0 n$ columns are selected uniformly at random for each seed, then with probability at least $1-\delta_0$ for each seed at least $n$ of the columns will be within a distance $\epsilon_0/\sqrt{3}$ of the seed.  Each seed has at least $\eta_0$ observed entries and we need to find $\ell_0 n$ other columns with at least $t_0$ observations at indices where the seed was observed.  Provided that $\eta_0\geq t_0$, this is certainly possible if $N$ is large enough. It follows from Lemma~\ref{partial} that $\ell_0 n$ columns have at least $t_0$ observations at indices where the seed was also observed, then with probability at least $1-\delta_0$ the partial distances will be within $\epsilon_0/\sqrt{2}$, which implies the true distances are within $\epsilon_0$.    The result follows by the union bound.
\end{proof}
Finally, we quantify just how large $N$ needs to be. Lemma~\ref{lemma:neighbors} also shows that we require at least
$$N \ \geq \ \ell n \ \geq \  \max\left\{\frac{2kn}{\nu_0 (\frac{\epsilon_0}{\sqrt{3}})^r} \, , \, \frac{8k  \log(s/\delta_0)}{\nu_0 (\frac{\epsilon_0}{\sqrt{3}})^r} \right\} \ . $$
However, we must also determine a lower bound on the probability that a column selected uniformly at random has at least $t_0$ observed indices in common with a seed.
Let $\gamma_0$ denote this probability, and let $p_0$ denote the probability of observing each entry in $\bX$. Note that our main result, Theorem~\ref{mainthm}, assumes that
\begin{eqnarray*}
p_0 \geq  \ \p0 \ .
\end{eqnarray*}
Since each seed has at least $\eta_0$ entries observed, $\gamma_0$ is greater than or equal to the probability that a $\mbox{Binomial}(\eta_0,p_0)$ random variable is at least $t_0$. Thus,
$$\gamma_0 \ \geq \ \sum_{j= t_0}^{\eta_0} {\eta_0\choose{j}}p_0^{j}(1-p_0)^{\eta_0-j} \ . $$
This implies that the expected number of columns with $t_0$ or more observed indices in common with a seed is at least $\gamma_0 N$.   If $\widetilde n$ is the actual number with this property,
then by Chernoff's bound, $\P(\widetilde n \leq \gamma_0 N/2) \leq \exp(-\gamma_0 N/8)$.
So $N\geq 2 \ell_0 \gamma_0^{-1} n$ will suffice to guarantee that enough columns can be found
for each seed with probability at least $1-s_0 \exp(-\ell_0 n/4) \geq 1-\delta_0$ since this will be far larger than $1-\delta_0$, since $\delta_0$ is polynomial in $n$.

To take this a step further, a simple lower bound on $\gamma_0$ is obtained as follows.  Suppose we consider only a $t_0$-sized subset of the indices where the seed is observed.  The probability that another column selected at random is observed at all $t_0$ indices in this subset is $p_0^{t_0}$.  Clearly $\gamma_0 \geq p_0^{t_0} = \exp(t_0 \log p_0) \geq (2s_0\ell_0 n)^{2 \mu_0^2 \log p_0}$. This yields the following sufficient condition on the size of $N$:
$$N \ \geq \  \ell_0 n (2s_0\ell_0 n/\delta_0)^{2 \mu_0^2 \log p_0^{-1}} \ . $$

From the definitions of $s_0$ and $\ell_0$, this implies that if $2\mu_0^2 \log p_0$ is a fixed constant, then a sufficient number of columns will exist if $N=O(\mbox{poly}(kn/\delta_0))$.  For example, if $\mu_0^2 = 1$ and $p_0 =1/2$, then $N=O((kn)/\delta_0)^{2.4})$ will suffice; i.e., $N$ need only grow polynomially in $n$.  On the other hand, in the extremely undersampled case $p_0$ scales like $\log^2(n)/n$ (as $n$ grows and $r$ and $k$ stay constant) and $N$ will need to grow almost exponentially in $n$, like $n^{\log n - 2\log\log n}$.

\section{Local Subspace Completion}
\label{sec:localsubspaces}

For each of our local neighbor sets, we will have an incompletely observed $n \times n$ matrix; if all the neighbors belong to a single subspace, the matrix will have rank $\leq r$.
First, we recall the following result from low-rank matrix completion theory \cite{mcRecht}.
\begin{lemma} \label{thm:mc} Consider an $n \times n$ matrix of rank $\leq r$ and row and column spaces with coherences bounded above by some constant $\mu_0$.  Then the matrix can be exactly completed if
\begin{eqnarray}
m' \geq 64\max{\left(\mu_1^2,\mu_0\right)} \beta r n\log^2\left(2n\right)
\end{eqnarray}
 entries are observed uniformly at random, for constants $\beta>0$ and with probability
$\geq 1-6 \left(2n\right)^{2-2\beta}\log{n}-n^{2-2\beta^{1/2}}$.
\end{lemma}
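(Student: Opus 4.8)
The plan is not to build this from scratch but to recognize it as the main exact-recovery guarantee of \cite{mcRecht} specialized to the square case. Recovery is carried out by the convex program that minimizes the nuclear norm $\|M\|_*$ subject to $M$ agreeing with the observed entries on $\Omega$, and the content of the cited theorem is that, under the stated incoherence hypotheses and sampling rate, this program has the underlying rank-$\leq r$ matrix as its unique optimizer with the claimed probability. My first step would simply be to write down this recovery program and identify the underlying matrix as its solution.

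The second step is parameter matching. The theorem of \cite{mcRecht} is stated for a general $n_1\times n_2$ matrix; the dimension enters mainly through the sum $n_1+n_2$, with the exception of a lone $n_1$ in the second probability term. Setting $n_1=n_2=n$ turns $n_1+n_2$ into $2n$, which is exactly what produces the $\log^2(2n)$ factor and the sample requirement $m'\geq 64\max(\mu_1^2,\mu_0)\,\beta\,r\,n\log^2(2n)$, as well as the two-term failure probability $6(2n)^{2-2\beta}\log n + n^{2-2\beta^{1/2}}$ obtained by substituting $n_1=n_2=n$ into the corresponding expression in \cite{mcRecht}. I would then check that the incoherence quantities line up: $\mu_0$ is exactly the subspace coherence of Definition~1 applied to both the column space and the row space, while $\mu_1$ is the parameter bounding the largest entry of the normalized cross term $\sum_k u_k v_k^{\top}$, and the bound depends on these only through $\max(\mu_1^2,\mu_0)$.

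One modeling point deserves a remark: the statement is phrased for $m'$ entries drawn uniformly at random, which matches the sampling model under which \cite{mcRecht} is proved, so no translation is needed for this lemma. Where an independent, with-replacement model is more convenient (as it is for other lemmas in this paper), Lemma~\ref{meta} provides the bridge, since exact completion is monotone in the revealed set; but that device is not required here.

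The genuinely difficult content is entirely inside the imported theorem and I would not reprove it. Establishing that the true matrix is the unique nuclear-norm minimizer requires exhibiting an approximate dual certificate lying in the tangent space to the rank-$r$ variety and controlling the operator norm of the random sampling operator restricted to that tangent space; in \cite{mcRecht} this is done via the golfing-scheme construction together with matrix Bernstein concentration inequalities, and it is exactly there that the incoherence assumptions and the $rn\log^2 n$ sampling rate are consumed. For the present lemma the only real obstacle is bookkeeping: confirming that the leading constant and both terms of the probability bound are reproduced exactly under the square-matrix specialization.
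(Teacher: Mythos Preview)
Your proposal is correct and matches the paper's treatment exactly: the paper does not prove this lemma at all but simply introduces it with ``we recall the following result from low-rank matrix completion theory \cite{mcRecht}'' and moves on. Your additional parameter-matching discussion (specializing the $n_1\times n_2$ statement to $n_1=n_2=n$ to obtain the $2n$ and the two-term failure probability) is more than the paper itself provides, but it is the right justification and there is nothing to correct.
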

%The constant $C$ depends on the $\mu_0$, the coherence of the subspace \cite{mcRecht}.

We wish to apply these results to our local neighbor sets, but we have three issues we must address: First, the sampling of the matrices formed by local neighborhood sets is not uniform since the set is selected based on the observed indices of the seed. Second, given Lemma~\ref{seed} we must complete not one, but $s_0 $ (see Algorithm~\ref{alg:localNeighbor}) incomplete matrices simultaneously with high probability. Third, some of the local neighbor sets may have columns from more than one subspace. Let us consider each issue separately.

First consider the fact that our incomplete submatrices are not sampled uniformly.  The non-uniformity can be corrected with a simple thinning procedure. Recall that the columns in the seed's local neighborhood are identified first by finding columns with sufficient overlap with each seed's observations. To refer to the seed's observations, we will say ``the support of the seed.''

Due to this selection of columns, the resulting neighborhood columns are highly sampled on the support of the seed. In fact, if we again use the notation $q$ for the minimum overlap between two columns needed to calculate distance, then these columns have at least $q$ observations on the support of the seed. Off the support, these columns are still sampled uniformly at random with the same probability as the entire matrix. Therefore we focus only on correcting the sampling pattern on the support of the seed.

Let $t$ be the cardinality of the support of a particular seed. Because all entries of the entire matrix are sampled independently with probability $p_0$, then for a randomly selected column, %the number of elements observed on the support of the seed
the random variable which generates $t$ is binomial. For neighbors selected to have at least $q$ overlap with a particular seed, we denote $t'$ as the number of samples overlapping with the support of the seed. The probability density for $t'$ is positive only for $j=q, \dots, t$,
$$\P(t'=j)  = \frac{{t\choose{j}} p_0^{j}(1-p_0)^{t-j} }{\rho}$$
%\nonumber \\
%& \geq &  {n_1\choose{s} }p^{2s}(1-p^2)^{n_1-s}$$
%
where $\rho = \sum_{j=q}^t{t\choose{j}} p_0^{j}(1-p_0)^{t-j} $.

In order to thin the common support, we need two new random variables. The first is a bernoulli, call it $Y$, which takes the value 1 with probability $\rho$ and 0 with probability $1-\rho$. %$p'' = \sum_{j=q}^t{t\choose{j}} p^{j}(1-p)^{t-j} $.
The second random variable, call it $Z$, takes values $j=0,\dots, q-1$ with probability
\begin{eqnarray*}
\P(Z=j) = \frac{{t\choose{j}}p_0^{j}(1-p_0)^{t-j}}{1-\rho}
\end{eqnarray*}

Define $t'' = t'  Y + Z  (1-Y)$. The density of $t''$ is
\begin{equation}
  \P(t''=j) = \left\{ \begin{array}{cl}
	\P(Z=j) (1-\rho) &  j = 0, \dots, q-1 \\
	\P(t'=j) \rho &  j = q, \dots, t \\
\end{array} \right.
\end{equation}
which equal to the desired binomial distribution.  Thus, the thinning is accomplished as follows.
For each column draw an independent sample of $Y$.  If the sample is $1$, then the column is not altered.  If the sample is $0$, then a realization of $Z$ is drawn, which we denote by $z$.  Select a random subset of size $z$ from the observed entries in the seed support and discard the remainder.  We note that the seed itself should not be used in completion, because there is a dependence between the sample locations of the seed column and its selected neighbors which cannot be eliminated.

%Procedurally, the new sampling is performed as follows for each column: The column has $j=q, \dots, s$ observations on the support of the seed. With probability $p'$, no change is made. With probability $1-p'$, we use the random variable $Z$ to select a new number of entries to be sampled, $j' < q$. Since entries are initially chosen independently, the set of $j\geq q$ indices which are observed is uniform over the set of all possible size-$j$ sets; thus in order to thin to $j'$ we simply choose a size-$j'$ set uniformly from all size-$j'$ sets of the observed indices. We throw the seed column away and perform matrix completion on the remaining thinned columns.

Now after thinning, we have the following matrix completion guarantee for each neighborhood matrix.
\begin{lemma}
\label{lemma:mcNN}
Assume all $s_0$ seed neighborhood matrices are thinned according to the discussion above, have rank $\leq r$, and the matrix entries are observed uniformly at random with probability,
\begin{eqnarray}
p_0 \geq {\frac{128\, \beta \max\{\mu_1^2,\mu_0\}}{\nu_0} \ \frac{r\, \log^2( n)}{n}}
%128\max{\left(\mu_1^2,\mu_0\right)} \beta \frac{r}{n} \log^2\left(2n\right)
\end{eqnarray}
Then with probability
$\geq 1- 12 s_0 n^{2-2\beta^{1/2}}\log{n}$, all $s_0$ matrices can be perfectly completed.
\end{lemma}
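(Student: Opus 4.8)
The plan is to reduce each of the $s_0$ neighborhood completions to a single application of the low-rank completion guarantee in Lemma~\ref{thm:mc} and then union bound over the $s_0$ matrices. Two of the three difficulties flagged before the lemma are already disposed of by its hypotheses and by the preceding construction: the statement assumes each neighborhood matrix has rank $\le r$ (so the possible mixing of subspaces is set aside here and handled later in the refinement step), and the thinning procedure guarantees that, after thinning, every entry of each $n\times n$ neighborhood matrix is observed independently and uniformly with probability $p_0$, both on and off the seed support. Thus the only remaining tasks are (i) to certify that $\mathrm{Bernoulli}(p_0)$ sampling yields enough observed entries to meet the deterministic budget $m' := 64\max(\mu_1^2,\mu_0)\beta r n\log^2(2n)$ demanded by Lemma~\ref{thm:mc}, and (ii) to bookkeep the resulting failure probability into the stated form.

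For (i), the expected number of observed entries in a thinned $n\times n$ matrix is $p_0 n^2$. Using the hypothesis on $p_0$ together with $\nu_0\le 1$,
\[
p_0 n^2 \ \ge\ \frac{128\,\beta\max\{\mu_1^2,\mu_0\}}{\nu_0}\, r n\log^2(n)\ \ge\ 128\,\beta\max\{\mu_1^2,\mu_0\}\, r n\log^2(n).
\]
This exceeds $m'$ by a factor tending to $2$ as $n\to\infty$, since $\log^2(2n)/\log^2(n)\to 1$. The doubled constant ($128$ against the $64$ in Lemma~\ref{thm:mc}) is precisely the margin that lets a Chernoff bound close the gap: if $M\sim\mathrm{Binomial}(n^2,p_0)$ denotes the number of observed entries, then for $n$ large enough $\mathbb{E}M$ is bounded below by $m'$ times a constant strictly above $1$, so $\P(M<m')$ is super-polynomially small in $n$ and is negligible against the target failure rate.

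To convert the fixed-budget guarantee of Lemma~\ref{thm:mc} into a statement about $\mathrm{Bernoulli}(p_0)$ sampling, I would condition on $M=m$: given $M=m$ the observed set is uniform over size-$m$ subsets, and the event that exact completion succeeds is monotone, since completion from any superset of a recoverable observation set cannot fail. Its failure probability is therefore non-increasing in $m$, which is exactly the monotonicity hypothesis invoked in Lemma~\ref{meta}. Consequently, on the event $\{M\ge m'\}$ the per-matrix failure probability is at most that of Lemma~\ref{thm:mc} evaluated at $m'$, namely $6(2n)^{2-2\beta}\log n+n^{2-2\beta^{1/2}}$.

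For (ii) I would simplify this per-matrix bound using $\beta>1$. Since $\beta>\beta^{1/2}$ we have $n^{2-2\beta}\le n^{2-2\beta^{1/2}}$, and since $2-2\beta<0$ we have $2^{2-2\beta}\le 1$; hence $6(2n)^{2-2\beta}\log n\le 6\,n^{2-2\beta^{1/2}}\log n$, while $n^{2-2\beta^{1/2}}\le n^{2-2\beta^{1/2}}\log n$. Adding the negligible Chernoff term, the total per-matrix failure is at most $12\,n^{2-2\beta^{1/2}}\log n$, and a union bound over the $s_0$ neighborhood matrices multiplies this by $s_0$, giving the claimed $12 s_0\, n^{2-2\beta^{1/2}}\log n$. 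The main obstacle is not any single analytic step but the bookkeeping in (i)--(ii): verifying that $\mathbb{E}M$ genuinely clears $m'$ by a constant factor despite the mismatch between $\log^2(n)$ and $\log^2(2n)$, and confirming that the two distinct exponents $2-2\beta$ and $2-2\beta^{1/2}$ together with all stray constants and the Chernoff term fit under the single constant $12$. Once the thinning has rendered the sampling uniform, everything else follows directly.
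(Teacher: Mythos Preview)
Your proposal is correct and follows essentially the same route as the paper: apply Lemma~\ref{thm:mc} to each thinned neighborhood matrix, use a Chernoff bound to pass from the deterministic budget $m'$ to Bernoulli$(p_0)$ sampling (exploiting the factor-of-two slack $128$ vs.\ $64$), simplify the two failure terms via $\beta>\beta^{1/2}$ and $2^{2-2\beta}\le 1$ into $12\,n^{2-2\beta^{1/2}}\log n$, and union bound over the $s_0$ matrices. If anything you are slightly more careful than the paper in articulating the monotonicity step that converts the fixed-$m$ guarantee into a Bernoulli one and in flagging the $\log^2(2n)$ versus $\log^2(n)$ mismatch.
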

\begin{proof}
First, we find that if each matrix has
\begin{eqnarray*}
m' \geq 64\max{\left(\mu_1^2,\mu_0\right)} \beta rn \log^2\left(2n\right)
\end{eqnarray*}
entries observed uniformly at random (with replacement), then with probability $\geq 1- 12 s_0 n^{2-2\beta^{1/2}}\log{n}$, all $s_0$ matrices are perfectly completed.  This follows by Lemma~\ref{thm:mc}, the observation that $$6 \left(2n\right)^{2-2\beta}\log{n}+n^{2-2\beta^{1/2}} \leq 12 n^{2-2\beta^{1/2}}\log{n} \ , $$ and a simple application of the union bound.

But, under our sampling assumptions, the number of entries observed in each seed neighborhood matrix is random.  Thus, the total number of observed entries in each is guaranteed to be sufficiently large with high probability as follows.  The random number of entries observed in an $n\times n$ matrix is $\widehat{m}\sim \mbox{Binomial}(p_0,n^2)$.  By Chernoff's bound we have
$\P(\widehat{m}\leq n^2 p_0/2) \leq \exp(-n^2 p_0/8)$.   By the union bound we find that $\widehat{m} \geq m'$ entries are observed in each of the $s_0$ seed matrices with probability at least $1-\exp(-n^2p_0/8+\log s_0)$ if $p_0 \geq {\frac{128\, \beta \max\{\mu_1^2,\mu_0\}}{\nu_0} \ \frac{r\, \log^2( n)}{n}}$.

Since $n^2p_0>rn\log^2 n$ and $s_0=O(k(\log k+\log n))$, this probability tends to zero exponentially in $n$ as long as $k=o(e^n)$, which holds according to  Assumption {\bf A1}.  Therefore this holds with probability at least $1-12 s_0 n^{2-2\beta^{1/2}}\log{n}$.
\end{proof}

Finally, let us consider the third issue, the possibility that one or more of the points in the neighborhood of a seed lies in a subspace different than the seed subspace.  When this occurs, the rank of the submatrix formed by the seed's neighbor columns will be larger than the dimension of the seed subspace. Without loss of generality assume that we have only two subspaces represented in the neighbor set, and assume their dimensions are $r'$ and $r''$.  First, in the case that $r' + r'' > r$, when a rank $\geq r$ matrix is completed to a rank $r$ matrix, with overwhelming probability there will be errors with respect to the observations as long as the number of samples in each column is $O(r \log r)$, which is assumed in our case; see~\cite{balzanoSubspace}. Thus we can detect and discard these candidates.  Secondly, in the case that $r' + r'' \leq r$, we still have enough samples to complete this matrix successfully with high probability. However, since we have drawn enough seeds to guarantee that every subspace has a seed with a neighborhood entirely in that subspace, we will find that this problem seed is redundant.  This is determined in the Subspace Refinement step.

\section{Subspace Refinement}
\label{sec:subspacerefine}

Each of the matrix completion steps above yields a low-rank matrix with a corresponding column subspace, which we will call the {\em candidate} subspaces.
While the true number of subspaces will not be known in advance, since $s_0=O(k(\log k+\log(1/\delta_0))$, the candidate subspaces will contain the true subspaces with high probability (see Lemma~\ref{lemma:neighbors}). We must now deal with the algorithmic issue of determining the true set of subspaces.

%We first note that, since the points are assumed to be drawn from a continuous distributions on the subspaces, with probability 1 a set
We first note that, from Assumption {\bf{A3}}, with probability 1 a set
of points of size $\geq r$ all drawn from a single subspace ${\cal S}$ of dimension $\leq r$ will span ${\cal S}$. In fact, any $b<r$ points will span a $b$-dimensional subspace of the $r$-dimensional subspace ${\cal S}$.

Assume that $r<n$, since otherwise it is clearly necessary to observe all entries. Therefore, if a seed's nearest neighborhood set is confined to a single subspace, then the columns in span their subspace. And if the seed's nearest neighborhood contains columns from two or more subspaces, then the matrix will have rank larger than that of any of the constituent subspaces.
Thus, if a certain candidate subspace is spanned by the union of two or more smaller candidate subspaces, then it follows that that subspace is not a true subspace (since we assume that none of the true subspaces are contained within another).

This observation suggests the following subspace refinement procedure. The $s_0$ matrix completions yield $s\leq s_0$ candidate column subspaces; $s$ may be less than $s_0$ since completions that fail are discarded as described above.
First sort the estimated subspaces in order of rank from smallest to largest (with arbitrary ordering of subspaces of the same rank), which we write as $\S_{(1)},\dots,\S_{(s)}$. We will denote the final set of estimated subspaces as $\widehat{\S}_1,\dots,\widehat{\S}_k$.   The first subspace $\widehat{\S}_1 := \S_{(1)}$, a lowest-rank subspace in the candidate set.  Next, $\widehat{\S}_2 = \S_{(2)}$ if and only if $\S_{(2)}$ is not contained in $\widehat{\S}_1$.   Following this simple sequential strategy, suppose that when we reach the candidate  $\S_{(j)}$ we have so far determined $\widehat{\S}_1,\dots,\widehat{\S}_i$, $i<j$.  If $\S_{(j)}$ is not in the span of $\cup_{\ell=1}^i \widehat{\S}_\ell$, then we set $\widehat{\S}_{i+1} = \S_{(j)}$, otherwise we move on to the next candidate.
In this way, we can proceed sequentially through the rank-ordered list of candidates, and we will identify all true subspaces.

\section{The Full Monty}
\label{sec:fullmonty}

Now all will be revealed.  At this point, we have identified the true subspaces, and all $N$ columns lie in the span of one of those subspaces.  For ease of presentation, we assume that the number of subspaces is exactly $k$. However if columns lie in the span of fewer than $k$, then the procedure above will produce the correct number. To complete the full matrix, we proceed one column at a time. For each column of $\bX_{\Omega}$, we determine the correct subspace to which this column belongs, and we then complete the column using that subspace. We can do this with high probability due to results from~\cite{balzanoSubspace,balzanoszlam11}.

The first step is that of subspace assignment, determining the correct subspace to which this column belongs. In~\cite{balzanoszlam11}, it is shown that given $k$ subspaces, an incomplete vector can be assigned to its closest subspace with high probability given enough observations. In the situation at hand, we have a special case of the results of~\cite{balzanoszlam11} because we are considering the more specific situation where our incomplete vector lies exactly in one of the candidate subspaces, and we have an upper bound for both the dimension and coherence of those subspaces.

\begin{lemma}
%Given a matrix $U$ whose orthonormal columns span ${\cal S}$ and an index set $\Omega \in \{1, \dots, n\}$, let $P_{\Omega,{\cal S}} =  U_\Omega \left( U_\Omega^T U_\Omega \right)^{-1} U_\Omega^T$. Let $\{{\cal S}_1, \dots, {\cal S}_{
%k}\}$ be a collection of $k$ subspaces of dimension $\leq r$ and coherence parameter bounded above by $\mu_0$. Without loss of generality, suppose the column of interest $x \in {\cal S}_1$. %Also suppose, as before, that $v$ is drawn from a binomial distribution such that $x \notin {\cal S}_i$, $i=2,\dots,k$.
%Assume {\bf{A3}} holds, and let the probability of observing each entry of $x$ be independent and Bernoulli with parameter $$p_0 \geq \p0\;.$$  Then with probability at least $1-(3(k-1)+2)\delta_0$,
Let $\{{\cal S}_1, \dots, {\cal S}_{
k}\}$ be a collection of $k$ subspaces of dimension $\leq r$ and coherence parameter bounded above by $\mu_0$.  Consider column vector $x$ with index set $\Omega \in \{1, \dots, n\}$, and define $P_{\Omega,{\cal S}_j} = U^{j}_\Omega \left( \left(U^{j}_\Omega\right)^T U^{j}_\Omega \right)^{-1} \left(U^{j}_\Omega\right)^T$, where $U^{j}$ is the orthonormal column span of ${\cal S}_j$ and $U^{j}_\Omega$ is the column span of ${\cal S}_j$ restricted to the observed rows, $\Omega$.  Without loss of generality, suppose the column of interest $x \in {\cal S}_1$.  If {\bf{A3}} holds, and the probability of observing each entry of $x$ is independent and Bernoulli with parameter $$p_0 \geq \p0\;.$$  Then with probability at least $1-(3(k-1)+2)\delta_0$,
\begin{equation}
\|x_\Omega - P_{\Omega,{\cal S}_1} x_\Omega\|_2^2 = 0
\label{eq:incompleteResid}
\end{equation}
and for $j=2,\dots,k$
\begin{equation}
\|x_\Omega - P_{\Omega,{\cal S}_j} x_\Omega\|_2^2 \ > \ 0 \ .
\label{eq:assignment}
\end{equation}
\label{thm:projection}
\end{lemma}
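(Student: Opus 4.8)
The plan is to treat the two displayed claims separately: equation~(\ref{eq:incompleteResid}) is essentially a consequence of $x$ genuinely lying in ${\cal S}_1$, while equation~(\ref{eq:assignment}) is the substantive part and is where the incomplete-data subspace detection machinery of~\cite{balzanoSubspace} must be invoked. The only role of Assumption {\bf A3} here is to guarantee that, with probability one, $x \in {\cal S}_1$ does not lie in any other subspace, so that the \emph{full} residuals $\|x - P_{{\cal S}_j} x\|_2^2$ are strictly positive for $j \geq 2$; the work is then to transfer this positivity to the \emph{observed} residuals on $\Omega$.

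First I would dispose of (\ref{eq:incompleteResid}). Since $x \in {\cal S}_1$ we may write $x = U^1 a$ for some coefficient vector $a$, whence $x_\Omega = U^1_\Omega a$ lies in the column space of $U^1_\Omega$. By its definition $P_{\Omega,{\cal S}_1}$ is exactly the orthogonal projector onto $\mathrm{range}(U^1_\Omega)$ as soon as $(U^1_\Omega)^T U^1_\Omega$ is invertible, i.e. as soon as $U^1_\Omega$ has full column rank; in that case $P_{\Omega,{\cal S}_1} x_\Omega = x_\Omega$ and the residual is identically zero. So the only thing to check is that $U^1_\Omega$ retains full column rank after sampling. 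This follows from the coherence bound $\mu({\cal S}_1) \leq \mu_0$ together with a lower bound on $|\Omega|$: by Chernoff, $|\Omega| \geq n p_0/2$ with probability at least $1-\delta_0$, and the choice $p_0 \geq \p0$ makes $n p_0/2$ far larger than the $O(r\mu_0 \log r)$ samples needed to keep the restricted Gram matrix $(U^1_\Omega)^T U^1_\Omega$ well-conditioned (the standard smallest-singular-value estimate underlying~\cite{balzanoSubspace}). This contributes two of the $\delta_0$ terms: one from the global Chernoff bound on $|\Omega|$ and one from the full-rank event for $U^1_\Omega$.

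The heart of the argument is (\ref{eq:assignment}). Fix $j \geq 2$ and set $y := x - P_{{\cal S}_j} x$, so that by {\bf A3} we have $\|y\|_2^2 > 0$ with probability one. I would then apply the missing-data residual bound of~\cite{balzanoSubspace}, which lower-bounds the observed residual by a positive multiple of the full residual,
$$\|x_\Omega - P_{\Omega,{\cal S}_j} x_\Omega\|_2^2 \ \geq \ \frac{|\Omega|(1-\alpha) - r\mu_0 \frac{(1+\beta)^2}{1-\gamma}}{n}\, \|y\|_2^2 \, ,$$
where $\alpha,\beta,\gamma$ are small error terms shrinking like $\sqrt{(\mathrm{poly}\log)/|\Omega|}$, each governing one concentration event (a lower-tail bound on $\|y_\Omega\|_2^2$ via the coherence of $y$, bounded by $\mu_1$ under {\bf A2}; and a smallest-singular-value bound on $U^j_\Omega$ via $\mu_0$). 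Because $|\Omega| \gtrsim n p_0 \gtrsim r\log^2 n \gg r\mu_0$, the coefficient on the right is strictly positive with probability at least $1-3\delta_0$, forcing the observed residual to be strictly positive. A union bound over the $k-1$ competing subspaces contributes $3(k-1)\delta_0$, and combining with the two terms from the previous paragraph yields the stated $(3(k-1)+2)\delta_0$.

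The main obstacle is precisely (\ref{eq:assignment}): the danger is that although $x \notin {\cal S}_j$, the handful of observed coordinates of $x$ could by chance be consistent with ${\cal S}_j$, making the observed residual vanish even though the full residual does not. Ruling this out is exactly what {\bf A2}--{\bf A3} together with the concentration bound of~\cite{balzanoSubspace} buy us, and the delicate point is tracking the sample-complexity constants so that the positivity condition $|\Omega|(1-\alpha) > r\mu_0 (1+\beta)^2/(1-\gamma)$ is genuinely implied by the assumed lower bound on $p_0$. Care is also needed because the sampling is Bernoulli$(p_0)$ rather than a fixed-size uniform sample; as elsewhere in the paper this is reconciled by conditioning on $|\Omega|$, noting the residual bound is monotone in the sample count, and invoking the translation of Lemma~\ref{meta}.
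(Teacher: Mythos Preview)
Your proposal is correct and follows essentially the same route as the paper: a Chernoff bound gives $|\Omega|\ge np_0/2$ (one $\delta_0$), invertibility of $(U^1_\Omega)^T U^1_\Omega$ via the coherence bound yields (\ref{eq:incompleteResid}) (a second $\delta_0$), and the observed-residual lower bound of~\cite{balzanoSubspace} applied to each $j\ge 2$ supplies (\ref{eq:assignment}) at cost $3\delta_0$ apiece, combined via union bound. Two cosmetic remarks: use a symbol other than $\beta$ for the error term in the residual inequality, since $\beta$ is already reserved in (\ref{delta}); and the coherence parameter $\mu_1$ entering the $\alpha,\xi$ terms is that of the column $x$ itself (as in~\cite{balzanoSubspace}), not of $y=x-P_{{\cal S}_j}x$, so {\bf A2} is invoked for $x$ rather than for $y$.
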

\begin{proof}
We wish to use results from~\cite{balzanoSubspace,balzanoszlam11}, which require a fixed number of measurements $|\Omega|$. By Chernoff's bound $$\P\left(|\Omega| \leq \frac{np_0}{2}\right) \ \leq \ \exp\left(\frac{-np_0}{8}\right).$$
Note that $np_0 > 16r\beta\log^2n$, therefore $\exp\left(\frac{-np_0}{8}\right) < (n^{-2\beta})^{\log n} < \delta_0$;
in other words, we observe $|\Omega|>np_0/2$ entries of $x$ with probability $1-\delta_0$. This set $\Omega$ is selected uniformly at random among all sets of size $|\Omega|$, but using Lemma~\ref{meta} we can assume that the samples are drawn uniformly with replacement in order to apply results of~\cite{balzanoSubspace,balzanoszlam11}.

Now we show that $|\Omega|>np_0/2$ samples selected uniformly with replacement implies that
\begin{equation}
|\Omega|  > \max\left\{\frac{8 r\mu_0}{3}  \log\left(\frac{2r}{\delta_0}\right), \frac{r\mu_0 (1+\xi)^2}{(1-\alpha)(1-\gamma)}\right\}
\label{eq:projlemreq}
\end{equation}
where $\xi, \alpha > 0$ and $\gamma \in (0,1)$ are defined as $\alpha = \sqrt{\frac{2 \mu_1^2}{|\Omega|} \log\left(\frac{1}{\delta_0}\right)}$, $\xi = \sqrt{2\mu_1\log\left(\frac{1}{\delta_0}\right)}$, and $\gamma = \sqrt{\frac{8r\mu_0}{3|\Omega|} \log\left(\frac{2r}{\delta_0}\right)}$.

We start with the second term in the max of (\ref{eq:projlemreq}). Substituting $\delta_0$ and the bound for $p_0$, one can show that for $n \geq 15$ both $\alpha \leq 1/2$ and $\gamma \leq 1/2$. This makes $(1+\xi)^2 / (1-\alpha)(1-\gamma) \leq 4(1-\xi)^2 \leq 8 \xi^2$ for $\xi>2.5$, i.e. for $\delta_0< 0.04$.

We finish this argument by noting that $8 \xi^2 = 16 \mu_1 \log(1/\delta_0) < n p_0 / 2$; there is in fact an $O(r\log(n))$ gap between the two. Similarly for the first term in the max of (\ref{eq:projlemreq}), $\frac{8}{3} r \mu_0  \log\left(\frac{2r}{\delta_0}\right) < n p_0 / 2$; here the gap is $O(\log(n))$.

Now we prove (\ref{eq:incompleteResid}), which follows from~\cite{balzanoSubspace}. With  $|\Omega|  > \frac{8}{3} r \mu_0 \log\left(\frac{2r}{\delta_0}\right)$, we have that $ U_\Omega^T U_\Omega$ is invertible with probability at least $1-\delta_0$ according to Lemma~3 of~\cite{balzanoSubspace}. This implies that
\begin{equation}
U^T x = \left( U_\Omega^T U_\Omega\right)^{-1} U_\Omega^T x_\Omega \;.
\label{eq:sameweights}
\end{equation}
Call $a_1 = U^Tx$. Since $x \in {\cal S}$, $Ua_1 = x$, and $a_1$ is in fact the unique solution to $Ua = x$.
Now consider the equation $U_\Omega a = x_\Omega$. The assumption that $U_\Omega^T U_\Omega$ is invertible implies that $a_2 = \left( U_\Omega^T U_\Omega\right)^{-1} U_\Omega^T x_\Omega$ exists and is the unique solution to $U_\Omega a = x_\Omega$.
However, $U_\Omega a_1 = x_\Omega$ as well, meaning that $a_1=a_2$. Thus, we have
$$\|x_\Omega - P_{\Omega,{\cal S}_1} x_\Omega\|_2^2 = \|x_\Omega - U_\Omega U^T x\|_2^2 = 0$$
with probability at least $1-\delta_0$.

Now we prove (\ref{eq:assignment}), paralleling Theorem~1 in~\cite{balzanoszlam11}. We use Assumption {\bf{A3}} to ensure that $x \notin {\cal S}_j$, $j=2,\dots,k$. This along with (\ref{eq:projlemreq}) and Theorem~1 from~\cite{balzanoSubspace} guarantees that
\begin{eqnarray*}
 \|x_\Omega - P_{\Omega,{\cal S}_j} x_\Omega\|_2^2 \hfill \geq \frac{ |\Omega| (1-\alpha) - r \mu_0 \frac{(1+\xi)^2}{ 1-\gamma}}{n}  \|x - P_{{\cal S}_j} x\|_2^2> 0
%FORMATEDIT
%&& \|x_\Omega - P_{\Omega,{\cal S}_j} x_\Omega\|_2^2 \hfill \\
%&& \ \ \ \ \ \ \ \geq \frac{ |\Omega| (1-\alpha) - r \mu_0 \frac{(1+\xi)^2}{ 1-\gamma}}{n}  \|x - P_{{\cal S}_j} x\|_2^2> 0
\end{eqnarray*}
for each $j = 2, \dots, k$ with probability at least $1-3\delta_0$. With a union bound this holds simultaneously for all $k-1$ alternative subspaces with probability at least $1-3(k-1)\delta_0$. When we also include the events that (\ref{eq:incompleteResid}) holds and that $|\Omega|>np_0/2$, we get that the entire theorem holds with probability at least $1-(3(k-1)+2)\delta_0$. \end{proof}

Finally, denote the column to be completed by $x_\Omega$. To complete $x_\Omega$ we first
determine which subspace it belongs to using the results above.  For a given column we can use the \emph{incomplete data projection residual} of (\ref{eq:incompleteResid}).  With probability at least $1-(3(k-1)+2)\delta_0$, the residual will be zero for the correct subspace and strictly positive for all other subspaces.  Using the span of the chosen subspace, $U$, we can then complete the column by using $\widehat{x} = U \left( U_\Omega^T U_\Omega\right)^{-1} U_\Omega^T x_\Omega$.

We reiterate that Lemma~\ref{thm:projection} allows us to complete a single column $x$ with probability $1-(3(k-1)+2)\delta_0$. If we wish to complete the entire matrix, we will need another union bound over all $N$ columns, leading to a $\log N$ factor in our requirement on $p_0$.
Since $N$ may be quite large in applications, we prefer to state our result in terms of per-column completion bound.

The confidence level stated in Theorem~\ref{mainthm} is the result of applying the union bound to all the steps required in the Sections~3, 4, and 6.  All hold simultaneously with probability at least
\begin{eqnarray*}
1- (6+3(k-1)+12s_0)\, \delta_0 & < & 1-(6+15s_0)\delta_0 \ ,
\end{eqnarray*}
which proves the theorem.

\section{Experiments}
\label{sec:exp}

The following experiments evaluate the performance of the proposed high-rank matrix completion procedure and compare results with standard low-rank matrix completion based on nuclear norm minimization.

\subsection{Numerical Simulations}

We begin by examining a highly synthesized experiment where the data exactly matches the assumptions of our high-rank matrix completion procedure.  The key parameters were chosen as follows: $n = 100$, $N=5000$, $k=10$, and $r=5$.
The $k$ subspaces were $r$-dimensional, and each was generated by $r$ vectors drawn from the
${\cal N}(0,I_{n})$ distribution and taking their span.   The resulting subspaces are highly incoherent with the canonical basis for $\R^{n}$.
For each subspace, we generate $500$ points drawn from a ${\cal N}(0,UU^{T})$ distribution,
where $U$ is a $n\times r$ matrix whose orthonormal columns span the subspace.  Our procedure was implemented using $\lceil 3k\log k\rceil$ seeds.  The matrix completion software called GROUSE (available here \cite{grouse}) was used in our procedure and to implement the standard low-rank matrix completions.  We ran $50$ independent trials of our procedure and compared it to standard low-rank matrix completion.  The results are summarized in the figures below.  The key message is that our new procedure can provide accurate completions from far fewer observations compared to standard low-rank completion, which is precisely what our main result predicts.

\begin{figure}[htb]
\begin{minipage}[h]{0.99\linewidth}
\centerline{\includegraphics[width=6cm]{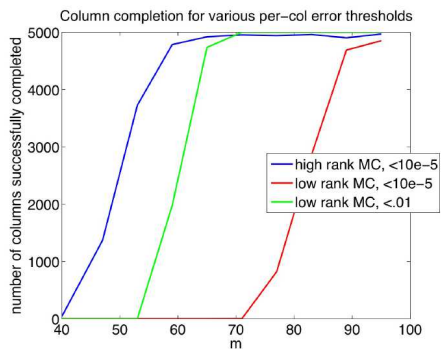}}
\end{minipage}
\caption{\label{fig:synthetic} \small The number of correctly completed columns (with tolerances shown above, $10$e-$5$ or $0.01$), versus the average number of observations per column.  As expected, our procedure (termed high rank MC in the plot) provides accurate completion with only about $50$ samples per column.  Note that $r\log n \approx 23$ in this simulation, so this is quite close to our bound.  On the other hand, since the rank of the full matrix is $rk=50$, the standard low-rank matrix completion bound requires $m> 50\log n \approx 230$.  Therefore, it is not surprising that the standard method (termed low rank MC above) requires almost all samples in each column.  }
\end{figure}

% \begin{figure}[htb]
% \begin{minipage}[h]{0.99\linewidth}
% \centerline{\includegraphics[width=8cm]{fullMCErrorfig1}}
% \end{minipage}
% \caption{\label{fig:networkExample} .}
% \end{figure}
%We also ran the high-rank matrix completion procedure for a larger problem size. The key parameters were $n_1=500$, $n_2=250,000$, $k=50$, and $r=5$. We again used $\lceil 2 k \log k\rceil$ seeds. Out of 20 runs, the maximum full matrix completion error $\frac{\|\hat{X}-X\|_F2}{\|X\|_F2}$ was at machine precision of $2.2$e-$14$.

\subsection{Network Topology Inference Experiments}

The ability to recover Internet router-level connectivity is of importance to network managers, network operators and the area of security.  As a complement to the heavy network load of standard active probing methods ({\em e.g.,} \cite{rocket}), which scale poorly for Internet-scale networks, recent research has focused on the ability to recover Internet connectivity from passively observed measurements \cite{EBNC07}.  Using a passive scheme, no additional probes are sent through the network; instead we place passive monitors on network links to observe ``hop-counts'' in the Internet ({\em i.e,} the number of routers between two Internet resources) from traffic that naturally traverses the link the monitor is placed on.  An example of this measurement infrastructure can be seen in Figure~\ref{fig:networkExample}.

\begin{figure}[htb]
\begin{minipage}[h]{0.99\linewidth}
%    \centerline{\epsfig{figure=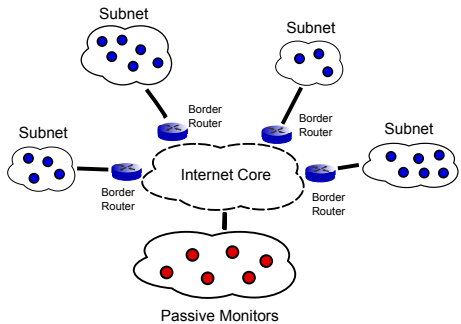,width=8.0cm}}
\centerline{\includegraphics[width=6cm]{networkExample}}
\end{minipage}
\caption{\label{fig:networkExample} \small Internet topology example of subnets sending traffic to passive monitors through the Internet core and common border routers.}
\end{figure}

These hop count observations result in an $n \times N$ matrix, where $n$ is the number of passive monitors and $N$ is the total unique IP addresses observed.  Due to the passive nature of these observations, specifically the requirement that we only observe traffic that happens to be traversing the link where a monitor is located, this hop count matrix will be massively incomplete.  A common goal is to impute (or {\em fill-in}) the missing components of this hop count matrix in order to infer network characteristics.

Prior work on analyzing passively observed hop matrices have found a distinct subspace mixture structure \cite{EBN08}, where the full hop count matrix, while globally high rank, is generated from a series of low rank subcomponents.  These low rank subcomponents are the result of the Internet topology structure, where all IP addresses in a common subnet exist behind a single common border router.  This network structure is such that any probe sent from an IP in a a particular subnet to a monitor must traverse through the same border router.  A result of this structure is a rank-two hop count matrix for all IP addresses in that subnet, consisting of the hop count vector to the border router and a constant offset relating to the distance from each IP address to the border router.  Using this insight, we apply the high-rank matrix completion approach on incomplete hop count matrices.

Using a Heuristically Optimal Topology from \cite{liSigcomm}, we simulated a network topology and measurement infrastructure consisting of $N=2700$ total IP addresses uniformly distributed over $k=12$ different subnets.  The hop counts are generated on the topology using shortest-path routing from $n=75$ passive monitors located randomly throughout the network.  As stated above, each subnet corresponds to a subspace of dimension $r=2$.   Observing only 40\% of the total hop counts, in Figure~\ref{fig:networkOrbis} we present the results of the hop count matrix completion experiments, comparing the performance of the high-rank procedure with standard low-rank matrix completion.  The experiment shows dramatic improvements, as over 70\% of the missing hop counts can be imputed exactly using the high-rank matrix completion methodology, and approximately no missing elements are imputed exactly using standard low-rank matrix completion.

\begin{figure}[htb!]
\begin{minipage}[h]{0.99\linewidth}
%    \centerline{\epsfig{figure=newNetworkResultsSmall.eps,width=8.0cm}}
\centerline{\includegraphics[width=6cm]{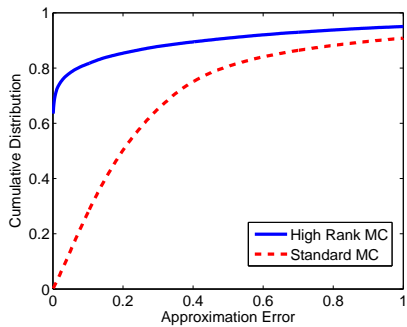}}
\end{minipage}
\caption{\label{fig:networkOrbis} \small Hop count imputation results, using a synthetic network with $k=12$ subnets, $n=75$ passive monitors, and $N=2700$ IP addresses.  The cumulative distribution of estimation error is shown with respect to observing 40\% of the total elements.}
\end{figure}

Finally, using real-world Internet delay measurements (courtesy of \cite{ledlieWild}) from $n=100$ monitors to $N=22550$ IP addresses, we test imputation performance when the underlying subnet structure is not known.  Using the estimate $k=15$, in Figure~\ref{fig:networkLatency} we find a significant performance increase using the high-rank matrix completion technique.

\begin{figure}[htb!]
\begin{minipage}[h]{0.99\linewidth}
%    \centerline{\epsfig{figure=newNetworkResultsSmall.eps,width=8.0cm}}
\centerline{\includegraphics[width=6cm]{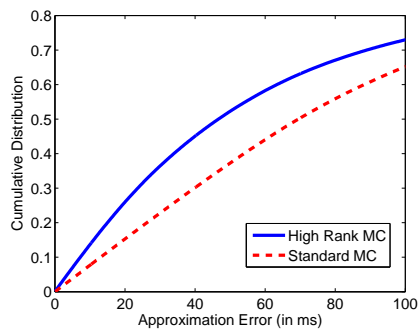}}
\end{minipage}
\caption{\label{fig:networkLatency} \small Real-world delay imputation results, using a network  $n=100$ monitors, $N=22550$ IP addresses, and an unknown number of subnets.  The cumulative distribution of estimation error is shown with respect to observing 40\% of the total delay elements.}
\end{figure}

%\section{Conclusion and Future Directions}
%
%This paper has presented an algorithm for provably correct subspace clustering when elements of data vectors are missing. An immediate extension of interest is to understand what additional assumptions are needed for this methodology to be provably correct with added noise.

\bibliographystyle{IEEEtran}
\bibliography{highrankMC12_15}

% Generated by IEEEtran.bst, version: 1.13 (2008/09/30)
\begin{thebibliography}{10}
\providecommand{\url}[1]{#1}
\csname url@samestyle\endcsname
\providecommand{\newblock}{\relax}
\providecommand{\bibinfo}[2]{#2}
\providecommand{\BIBentrySTDinterwordspacing}{\spaceskip=0pt\relax}
\providecommand{\BIBentryALTinterwordstretchfactor}{4}
\providecommand{\BIBentryALTinterwordspacing}{\spaceskip=\fontdimen2\font plus
\BIBentryALTinterwordstretchfactor\fontdimen3\font minus
  \fontdimen4\font\relax}
\providecommand{\BIBforeignlanguage}[2]{{%
\expandafter\ifx\csname l@#1\endcsname\relax
\typeout{** WARNING: IEEEtran.bst: No hyphenation pattern has been}%
\typeout{** loaded for the language `#1'. Using the pattern for}%
\typeout{** the default language instead.}%
\else
\language=\csname l@#1\endcsname
\fi
#2}}
\providecommand{\BIBdecl}{\relax}
\BIBdecl

\bibitem{mcRecht}
B.~Recht, ``{A Simpler Approach to Matrix Completion},'' in \emph{To appear in
  Journal of Machine Learning Research, {arXiv:0910.0651v2}}.

\bibitem{candes-tao}
E.~J. Cand\`es and T.~Tao, ``{The Power of Convex Relaxation: Near-Optimal
  Matrix Completion.}'' in \emph{{IEEE Transactions on Information Theory}},
  vol.~56, May 2010, pp. 2053--2080.

\bibitem{vidal}
R.~Vidal, ``{A Tutorial on Subspace Clustering},'' in \emph{{Johns Hopkins
  Technical Report}}, 2010.

\bibitem{kanatani01}
K.~Kanatani, ``{Motion Segmentation by Subspace Separation and Model
  Selection},'' in \emph{{Computer Vision, 2001. ICCV 2001. Proceedings. Eighth
  IEEE International Conference on}}, vol.~2, 2001, pp. 586--591.

\bibitem{gpcaJournal05}
R.~Vidal, Y.~Ma, and S.~Sastry, ``{Generalized Principal Component Analysis
  {(GPCA)}},'' \emph{IEEE Transactions on Pattern Analysis and Machine
  Intelligence}, vol.~27, December 2005.

\bibitem{lerman11}
G.~Lerman and T.~Zhang, ``{Robust Recovery of Multiple Subspaces by {$L_p$}
  Minimization},'' 2011, {Preprint} at \url{http://arxiv.org/abs/1104.3770}.

\bibitem{weiss04}
A.~Gruber and Y.~Weiss, ``{Multibody Factorization with Uncertainty and Missing
  Data using the EM Algorithm},'' in \emph{{Proceedings of the 2004 IEEE
  Computer Society Conference on Computer Vision and Pattern Recognition
  (CVPR)}}, vol.~1, June 2004.

\bibitem{gpcaMissingData08}
R.~Vidal, R.~Tron, and R.~Hartley, ``{Multiframe Motion Segmentation with
  Missing Data Using Power Factorization and {GPCA}},'' \emph{{International
  Journal of Computer Vision}}, vol.~79, pp. 85--105, 2008.

\bibitem{EBN08}
B.~Eriksson, P.~Barford, and R.~Nowak, ``{Network Discovery from Passive
  Measurements},'' in \emph{{Proceedings of ACM SIGCOMM Conference}}, Seattle,
  WA, August 2008.

\bibitem{candes-recht}
E.~Cand\`es and B.~Recht, ``{Exact Matrix Completion Via Convex
  Optimization.}'' in \emph{{Foundations of Computational Mathematics}},
  vol.~9, 2009, pp. 717--772.

\bibitem{domainImpute}
B.~Eriksson, P.~Barford, J.~Sommers, and R.~Nowak, ``{DomainImpute: Inferring
  Unseen Components in the Internet},'' in \emph{Proceedings of IEEE INFOCOM
  Mini-Conference}, Shanghai, China, April 2011, pp. 171--175.

\bibitem{balzanoSubspace}
L.~Balzano, B.~Recht, and R.~Nowak, ``{High-Dimensional Matched Subspace
  Detection When Data are Missing},'' in \emph{{Proceedings of the
  International Conference on Information Theory}}, June 2010, available at
  \url{http://arxiv.org/abs/1002.0852}.

\bibitem{mauroSubspace}
G.~Chen and M.~Maggioni, ``{Multiscale Geometric and Spectral Analysis of Plane
  Arrangements},'' in \emph{{IEEE Conference on Computer Vision and Pattern
  Recognition (CVPR)}}, Colorado Springs, CO, June 2011.

\bibitem{balzanoszlam11}
L.~Balzano, R.~Nowak, A.~Szlam, and B.~Recht, ``$k$-{S}ubspaces with missing
  data,'' University of Wisconsin, Madison, Tech. Rep. ECE-11-02, February
  2011.

\bibitem{grouse}
L.~Balzano and B.~Recht, 2010, \url{http://sunbeam.ece.wisc.edu/grouse/}.

\bibitem{rocket}
N.~Spring, R.~Mahajan, and D.~Wetherall, ``{Measuring ISP Topologies with
  Rocketfuel},'' in \emph{{Proceedings of ACM SIGCOMM}}, Pittsburgh, PA, August
  2002.

\bibitem{EBNC07}
B.~Eriksson, P.~Barford, R.~Nowak, and M.~Crovella, ``{Learning Network
  Structure from Passive Measurements},'' in \emph{{Proceedings of ACM Internet
  Measurement Conference}}, San Diego, CA, October 2007.

\bibitem{liSigcomm}
L.~Li, D.~Alderson, W.~Willinger, and J.~Doyle, ``{A First-Principles Approach
  to Understanding the Internet's Router-Level Topology},'' in
  \emph{{Proceedings of ACM SIGCOMM Conference}}, August 2004.

\bibitem{ledlieWild}
J.~Ledlie, P.~Gardner, and M.~Seltzer, ``{Network Coordinates in the Wild},''
  in \emph{{Proceedings of NSDI Conference}}, April 2007.

\end{thebibliography}

\end{document}